\newtheorem{theorem}{Theorem}
\newtheorem{definition}{Definition}
\newtheorem{claim}{Claim}
\newtheorem{lemma}{Lemma}
\newtheorem{remark}{Remark}
\DeclarePairedDelimiter{\ceil}{\lceil}{\rceil}
\DeclarePairedDelimiter{\floor}{\lfloor}{\rfloor}
\let \VEC \mathbf
\let \vec \mathbf
\long\def\comment#1{}
\let \lessthan \prec
\let \morethan \succ
\newcounter{l1}
\newcounter{l2}
\newcounter{l3}
\newcommand{\bdotlist}{\begin{list}{$\bullet$}{}}
\newcommand{\bboxlist}{\begin{list}{$\Box$}{}}
\newcommand{\bbboxlist}{\begin{list}{\raisebox{.005in}{{\tiny $\blacksquare$ \ \ }}}{}}
\newcommand{\bdashlist}{\begin{list}{$-$}{} }
\newcommand{\blist}{\begin{list}{}{} }
\newcommand{\barablist}{\begin{list}{\arabic{l1}}{\usecounter{l1}}}
\newcommand{\balphlist}{\begin{list}{(\alph{l2})}{\usecounter{l2}}}
\newcommand{\bAlphlist}{\begin{list}{\Alph{l2}.}{\usecounter{l2}}}
\newcommand{\bdiamlist}{\begin{list}{$\diamond$}{}}
\newcommand{\bromalist}{\begin{list}{(\roman{l3})}{\usecounter{l3}}}
\pgfplotsset{compat=newest}
\pgfplotsset{plot coordinates/math parser=false}
\begin{document}

\title{
Rate-constrained Energy Services: \\ Allocation Policies and Market Decisions$^\pi$
}

\author{Ashutosh Nayyar$^a$,  Matias Negrete-Pincetic$^b$, Kameshwar Poolla$^c$, Pravin Varaiya$^c$
\thanks{$^a$Ming Hsieh Department of Electrical Engineering, University of Southern California. Corresponding author. ashutosn@usc.edu.}
\thanks{$^b$Department of Electrical Engineering of Pontificia Universidad Catolica de Chile, Santiago, Chile.}
\thanks{$^c$Electrical Engineering \& Computer Sciences, University of California, Berkeley, CA 94720.}
\thanks{$^\pi$Supported in part by EPRI and CERTS under sub-award 09-206; PSERC S-52; NSF under Grants 1135872, EECS-1129061, CPS-1239178, and CNS-1239274; the Republic of Singapore's National Research Foundation through a grant to the Berkeley Education Alliance for Research in Singapore for the SinBerBEST Program.}
}

\maketitle
\thispagestyle{empty}

\begin{abstract}


The integration of renewable generation poses operational and economic challenges for the electricity grid. 
For the core problem of power balance, the legacy paradigm of tailoring supply to follow random demand may be inappropriate 
under deep penetration of uncertain and intermittent renewable generation. In this situation, there is an emerging consensus that the 
alternative approach of controlling demand to follow random supply offers compelling economic benefits in terms of reduced regulation costs. 
This approach exploits the flexibility of demand side resources and requires sensing, actuation, and communication infrastructure; 
distributed control algorithms; and viable schemes to compensate participating loads.
This paper considers {\em rate-constrained energy services} which are a specific paradigm for flexible demand. 
These services are characterized by a specified delivery window, the total amount of energy that must be supplied over this window,
and the maximum rate at which this energy may be delivered.  We consider a forward market where rate-constrained energy services are traded.   
We explore allocation policies and market decisions of a supplier in this market.  The supplier owns a generation mix that includes some uncertain renewable generation and may also purchase energy in day-ahead and real-time markets to meet customer demand. 
The supplier must optimally select  the portfolio of rate-constrained services to sell, the amount of day-ahead energy to buy, 
and the policies for making real-time energy purchases and allocations to customers  to maximize its expected profit.  We offer solutions to the supplier's decision and control problems to economically provide rate constrained energy services.

\end{abstract}

\section{Introduction} \label{sec-introduction}
The worldwide interest in renewable energy is driven by pressing environmental problems, energy supply security and nuclear power safety concerns. 
The energy production from these renewable sources is {\em variable}: uncontrollable, intermittent, uncertain. 
Variability is the core challenge to deep renewable integration.

A central problem in this context is that of economically balancing demand and supply of electricity in the 
presence of large amounts of variable generation. The legacy {\em supply side} approach 
is to absorb the  variability in operating reserves.
Here, renewables are treated as negative demand, so the variability appears as uncertainty in net load. 
This variability is handled by scheduling  fast-acting reserve generation.  This strategy of {\em tailoring supply to meet demand} works
at today's modest penetration levels. But it will not scale. Recent studies in California \cite{CAISO}  project that the load-following capacity requirements will need to increase from 2.3 GW to 4.4 GW. These large increases in reserves will significantly raise electricity cost, and diminish the net carbon benefit from renewables (\cite{kirschen2010,negwankowshamey12}).

There is an emerging consensus that demand side resources must play a key role in supplying zero-emissions regulation services 
that are necessary for deep renewable integration  (e.g., \cite{cal09K, galus2010, papaoren2010, matdyscal12K, anand2012}). These include thermostatically controlled loads (TCLs), electric vehicles (EVs), and smart appliances.  Some of these loads are deferrable: they can be shifted over time.  For example, charging of electrical vehicles (EVs) may be postponed to some degree.    Other loads such as HVAC units can be modulated within limits.  
The core idea of {\em demand side approaches} to renewable integration is to exploit load flexibility to 
track variability in supply, i.e., to tailor demand to match supply.  For this an aggregator 
offers a control and business interface between the flexible loads and the system operator (SO). 

The demand side approach has led to two streams of work: (a) indirect load control (ILC) where flexible loads respond, in real-time, to price proxy signals, and (b) direct load control (DLC) where flexible loads cede physical control of devices to operators (cluster managers) who determine appropriate actions. The advantage of DLC is that with greater control authority the cluster manager can more reliably control the aggregate load.  
However, DLC requires a more extensive control and communication infrastructure and the manager must provide economic incentives to recruit a sufficient consumer base. The advantage of ILC is that the consumer retains authority over her electricity consumption.  

Current research in { direct load control}  focuses on developing and analyzing algorithms for coordinating resources  \cite{galus2010, papaoren2010}, \cite{ChenLLF2011, LeeTPS2008, HsuDLC1991, MetsEVCharging2010, GanPESGM2012, MaIFAC2011, GHGIREP2010}. For example, \cite{GanPESGM2012} develops a distributed scheduling protocol for electric vehicle charging; \cite{papaoren2010} uses approximate dynamic programming to couple wind generation with deferrable loads; and \cite{MaIFAC2011, GHGIREP2010, galus2010} suggest the use of receding horizon control approaches for resource scheduling.

Recent studies in { indirect load control} have developed real-time pricing algorithms \cite{IlicTPSNov2011,Paschalidis} and
quantified operational benefits \cite{LijesenRTPElasticity}.  There has also been research focused on economic efficiency in \cite{Borenstein2005,Spees2007}, feedback stability of price signals in \cite{RoozbehaniCDC2010}, volatility of real-time markets in \cite{wannegkowshameysha11b} as well as the practical issues associated with implementing ILC programs presented in \cite{LBNL-RTPReport}. 

Both ILC and DLC require appropriate economic incentives for the consumers.  In ILC, the real-time price signals provides the required incentives. However, the quantification of those prices, the feasibility of consumer response and the impact on the system and market operations in terms of price volatility and instabilities is a matter of concern, as recent literature suggests (\cite{RoozbehaniCDC2010,LBNL-RTPReport,wannegkowshameysha11b}).  

DLC also requires the creation of economic signals, but unlike real-time pricing schemes DLC can  use forward markets. For DLC to be effective, it is necessary to offer consumers who present greater demand flexibility a larger discount. The discounted pricing can be arranged through flexibility-differentiated electricity markets. Here, electricity is regarded as a set of differentiated services as opposed to a homogeneous 
commodity.  Consumers can purchase an appropriate bundle of services that best meets their electricity needs.
From the producer's perspective, providing differentiated services may better accommodate supply variability. There is a growing body of work \cite{tanvar93, pravin2011,negmey12, bitar2013} on differentiated electricity services. 
An early exposition of differentiated energy services is offered in \cite{OrenSmith}.
There are other approaches to such services that naturally serve to integrate variable generation sources. 
Reliability differentiated energy services where consumers accept contracts for $p$ MW of power with probability $\rho$ are developed in \cite{tanvar93}. More recently, the works of \cite{bitarlow2012, bitar2013} consider deadline differentiated contracts where consumers receive price discounts for offering larger windows for the delivery of $E$ MWh of energy. 

This paper follows the direct load control paradigm. We consider a supplier proving rate-constrained energy series to its customers.  These services are characterized by the total amount of energy that must be delivered over a specified time period and the maximum rate at which this energy may be delivered. Examples of loads that can use these services include electric vehicles, pumping systems and smart appliances. The supplier can use its uncertain renewable generation as well as energy purchased in day-ahead and real-time energy markets to serve its customers. The supplier has to optimize  the portfolio of rate-constrained services to sell, the amount of day-ahead energy to buy and the policies for making real-time energy purchases and allocations to customers in order to maximize its expected profit. We provide solutions to the supplier's decision and control problems in this paper.

This paper is organized as follows.
We describe the supplier's operation and formulate its optimization problem in Section~\ref{sec:PF}. For a fixed choice of forward market decisions, we find optimal real-time energy purchase and energy allocation policies for the supplier in Section~\ref{sec:method}. Using these optimal policies, we provide a solution strategy for the supplier's optimization problem in Section~\ref{sec:profit}. We draw conclusions in Section~\ref{sec-conclusions}. 

\subsection*{Notation}
 Capital letters are random variables/random vectors. Small letters are realizations. Bold letters denote vectors. 
For a vector $\VEC a =(a_1,a_2,\ldots,a_T)$, $\VEC a^{\downarrow} =(a^{\downarrow}_1,\ldots, a^{\downarrow}_T)$ denotes the non-increasing 
rearrangement of $\VEC a$, so, $a^{\downarrow}_t \geq a^{\downarrow}_{t+1}$ for $t=1,2,\ldots,T-1$.
For an assertion $A$, $\mathds{1}_{A}$ denotes $1$ if $A$ is true and $0$ if $A$ is false. 
$(x)^+$ denotes $\max(x,0)$, $\floor{x}$ denotes the largest integer less than or equal to $x$, and $\ceil{x}$ denotes the smallest integer greater than or equal to $x$.

\section{Problem Formulation}\label{sec:PF}
 We consider a supplier providing rate-constrained energy services to consumers. The services are to be delivered over a time interval of $T$ units divided into discrete slots of unit duration. A rate-constrained energy service is characterized by the pair $(E,m)$, where $E$ represents the total amount of energy and $m$ represents the maximum energy that can be delivered per time slot. $m$ is referred to as the rate-constraint of the service.   The pair $(E,m)$ is restricted to be within a  bounded set in the non-negative quadrant of $\mathbb{R}^2$.  Services must also satisfy a basic feasibility condition: $E \leq mT$.
 
  A $(E,m)$ service can be provided by any energy allocation $u_1,u_2,\ldots,u_T$ that satisfies
 \[ \sum_{t=1}^T u_t = E, ~~~ 0 \leq u_t \leq m. \]
 The supplier sells the services to consumers in a forward market. The market specifies a price $\pi(E,m)$ for the service $(E,m)$. The market prices are known a priori and the supplier  and consumers act as price-takers in the forward market for services.
 
\subsection{Energy sources}
The supplier can use three types of energy sources to provide the services:
\begin{enumerate}
\item \emph{Renewable energy:} The supplier has access to an uncertain supply of renewable energy available at zero marginal cost. The renewable supply over the delivery period is denoted by the random vector $\vec{R} = (R_1,R_2,\ldots,R_T)$, where $R_t$ is the energy available in the $t$-th time slot. Its realizations are denoted by $\vec{r} =(r_1,r_2,\ldots,r_T)$.

\item  \emph{Day-ahead energy:} The supplier can purchase energy in a day-ahead market at a price $c^{da}$. The energy  purchased in the day-ahead market is used  over the delivery period. The day-ahead energy is denoted by $\vec{y} = (y_1,\ldots,y_T)$, where $y_t$ is energy available in the $t$-th time slot.

\item \emph{Real-time energy:} In addition to the day-ahead purchases, the supplier can purchase energy in a real-time market at the beginning of each of the $T$ time slots. The real-time energy price is $c^{rt}$. 
\end{enumerate}  

\subsection{Supplier's operation}
Suppose the supplier sells $n$ services  in the forward market for services. We will denote by $\mathcal{C}$ the set of services sold. Thus,
\[ \mathcal{C} = \{(E^1,m^1),(E^2,m^2),\ldots,(E^n,m^n) \} \]
 The $i^{th}$ service is delivered to the $i^{th}$ consumer.
 The supplier  earns a total revenue of
 \[ \sum_{i=1}^n \pi(E^i,m^i).\]
 At the time of making its day-ahead purchase, the supplier has a forecast model for its renewable supply specified as a probability distribution for the random vector $\vec{R}$. Suppose the supplier purchases the supply vector $\vec{y}$ in the day-ahead energy market. Then, its day-ahead cost is 
 \[ c^{da}\sum_{t=1}^T y_t.\]
 The real-time operation of the supplier is as follows:
 \begin{enumerate}
 \item[(i)] At the beginning of the $t$-th time slot, the supplier observes the renewable supply, $R_t$, for that slot. 
 \item[(ii)] It decides the amount of real-time energy to purchase in that slot. This amount is denoted by $A_t$ and is chosen according to a real-time purchase policy $g=(g_1,\ldots g_T)$ so that
 \[ A_t = g_t(\mathcal{C},\vec{y},R_1,R_2,\ldots,R_t).\]
 \item[(iii)] The total energy  available in the $t$-th slot is $y_t +R_t +A_t$. The supplier allocates this energy among its consumers. The energy allocated to the the $i$th consumer is denoted by $U^i_t$. The allocation amounts are chosen according to an allocation policy $h^i = (h^i_1,\ldots,h^i_T), i=1,\ldots,n$  so that
 \[ U^i_t =h^i_t(\mathcal{C},\vec{y},R_1,R_2,\ldots,R_t),\]
 with the constraint that total allocated energy $\sum_i U^i_t$ in the $t$th slot cannot exceed the total available energy in that slot, that is,
 \[ \sum_{i=1}^n U^i_t \leq y_t +R_t+A_t.\]
 \end{enumerate}
 The supplier is required to fully satisfy its consumers. That is, the $i$th consumer who purchased  the service $(E^i,m^i)$ must receive an allocation that satisfies
 \[ \sum_t U^i_t = E^i, ~~ 0 \leq U^i_t \leq m^i.\]
 
 \subsection{Discretization of services}
The pair $(E,m)$ characterizing a service can take any non-negative  value from a bounded set in $\mathbb{R}^2$. This leads to a continuum of possible services in the market. For both mathematical and modeling reasons, it will be convenient to assume that energy is sold and allocated in discrete multiples of some basic unit (say, $kW\times\mbox{duration of one time slot}$).  This implies  that $E$, $m$  as well as the allocations $U^i_t$ take discrete (non-negative integer) values. In particular, the allocation $U^i_t$ in the supplier's optimization problem is constrained to take values in the finite set $\{0,1,\ldots,m^i\}$.
\subsection{Optimization problem}\label{sec:profit_prob}
The supplier has to select the set $\mathcal{C}$ of services to be sold (both the cardinality $n$ and the types of services) in the forward market, the day-ahead energy to purchase, a real-time purchase policy $g$ and allocation policies $h^i, i=1,2,\ldots,n$ in order to maximize its expected profit
\[ J := \sum_{i=1}^n \pi(E^i,m^i) - c^{da}\sum_{t=1}^T y_t - c^{rt}\mathds{E}\left[\sum_{t=1}^T A_t\right] \]
subject to 
\[ \sum_{t=1}^T U^i_t = E^i, ~~ U^i_t \in \{0,1,\ldots, m^i\}, ~i=1,\ldots,n,\]
\[ \sum_{i=1}^n U^i_t \leq y_t +R_t+A_t,\]
  \[A_t = g_t(\mathcal{C},\vec{y},R_1,R_2,\ldots,R_t), \]
  and \[U^i_t =h^i_t(\mathcal{C},\vec{y},R_1,R_2,\ldots,R_t),~i=1,\ldots,n.\]
  
  \begin{remark}
  Note that $c^{da},c^{rt}$ are the day-ahead and real-time prices, respectively, for $1$ (discrete) unit of energy. These prices would be scaled appropriately if the discretization unit is changed.
  \end{remark}
 \section{Optimal policies for fixed $\mathcal{C}, \vec{y}$}\label{sec:method}
For a fixed choice of $\mathcal{C}, \vec{y}$, the maximum achievable profit can be written as 
\begin{equation} \label{eq:profit}
 J(\mathcal{C},\vec{y}) =  \sum_{i=1}^n \pi(E^i,m^i) - c^{da}\sum_{t=1}^T y_t - V(\mathcal{C},\vec{y}), 
\end{equation}
where 
\[ V(\mathcal{C},\vec{y}) = \min_{g,h} c^{rt}\mathds{E}\left[\sum_{t=1}^T A_t\right]\]
subject to
\begin{equation}\label{eq:constraints}
\begin{aligned} 
 &\sum_t U^i_t = E^i, ~~  U^i_t \in \{0,1,\ldots, m^i\}, ~i=1,\ldots,n,  \\
 &~~~~~\sum_{i=1}^n U^i_t \leq y_t +R_t+A_t,\\
 &A_t = g_t(\mathcal{C},\vec{y},R_1,R_2,\ldots,R_t),\\
 &U^i_t =h^i_t(\mathcal{C},\vec{y},R_1,R_2,\ldots,R_t),~i=1,\ldots,n.
\end{aligned}
\end{equation}
 
 Our main goal in this section is to characterize the function $V(\mathcal{C},\vec{y})$. Our approach will be to first find a lower bound on $V(\mathcal{C},\vec{y})$ under a \emph{information relaxation} assumption  where a oracle has revealed a priori the realization of renewable generation, and then constructing real-time policies that achieve this lower bound.  We will proceed in the following steps:
\begin{enumerate}
\item[1.]  Suppose it was known a priori (that is, before the start of real-time operation) that the  total available supply  is going to be $\vec{p} = (p_1,p_2,\ldots,p_T)$. We will find conditions under which this supply by itself is adequate for providing all the services in $\mathcal{C}$.
\item[2.] Suppose the realization of the renewable supply was known a priori to be $\vec{r} = (r_1,r_2,\ldots,r_T)$. Then, without any real-time purchases, the available supply would be $\vec{p}=\vec{y}+\vec{r}$. Using the conditions obtained in Step 1, we can determine if  the supply $\vec{y} + \vec{r}$ is adequate for providing all the services in  $\mathcal{C}$. If the supply is not adequate, we will find the minimum amount of additional energy required to achieve adequacy. Let this amount be denoted by $L(\mathcal{C}, \vec{y},\vec{r})$.
\item[3.] Recall that under a real-time purchase policy, the purchase decision at time slot $t$ is made without knowing the realization of future renewable generation.  We will argue that in order to satisfy all the problem constraints of \eqref{eq:constraints}, a real-time purchase policy must satisfy the inequality
\begin{equation}\label{eq:lowerbound}
\sum_{t=1}^T g_t(\mathcal{C},\vec{y},r_1,r_2,\ldots,r_t) \geq L(\mathcal{C}, \vec{y},\vec{r}), 
\end{equation}
for each realization $\vec{r}=(r_1,\ldots,r_T)$, where $L(\mathcal{C}, \vec{y},\vec{r})$ is the lower bound obtained in Step 2.
In other words, for each realization $\vec{r}$, a real-time purchase policy $g$ must purchase at least $L(\mathcal{C}, \vec{y},\vec{r})$ amount of energy in order to provide all the services.
\item[4.] Finally, we will construct a real-time purchase policy $g^*$ that achieves the lower bound  in \eqref{eq:lowerbound}. This would imply that the expected cost of real-time purchases under this policy is equal to $V(\mathcal{C},\vec{y})$.
\end{enumerate}

\subsection{Step 1: Characterizing Adequacy}
Consider  a given set of services $\mathcal{C}$. Suppose that the available supply is  known a priori to be $\vec{p}$. We want to determine if $\vec{p}$
 is adequate for the set of services $\mathcal{C}$ without any additional energy purchases. We can formally define the adequacy of $\vec{p}$ for $\mathcal{C}$ as follows.
 \begin{definition}
  The available supply $\vec{p}$ is adequate for the services in $\mathcal{C}$ if there exist allocations $u^i_t, i=1,2,\ldots,n$, $t=1,2,\ldots,T$ that satisfy 
\[ \sum_t u^i_t = E^i, ~~  u^i_t \in \{0,1,\ldots, m^i\}, ~i=1,\ldots,n,\]
\[ \sum_{i=1}^n u^i_t \leq p_t.\]
\end{definition}

To answer the question of adequacy, we will take a closer look at the allocations needed for each individual service. Consider the service $(E,m)$. Since we assumed discrete values for $E$ and $m$, we can find non-negative integers $k, \ell$ with $\ell < m$ such that 
\[ E =km+\ell.\]
We now define a decomposition of the service $(E,m)$ into $m$ separate services all of  which have a rate constraint  equal to $1$.

\begin{definition}
For a service $(E,m)$ with  $E = km + \ell$, $\ell< m$, we define a unit-rate decomposition of the service, denoted by $\mathcal{D}(E,m)$, as the following collection of services
\begin{align}
\mathcal{D}(E,m) := \{ (E^j, 1)| j =1,2,\ldots,m\},
\end{align}
where 
\begin{align}
E^j = \left\{ \begin{array}{ll}
k+1 ~\mbox{for $j \leq \ell$,}\\
k~~~~ ~\mbox{for $j > \ell$}
\end{array}
\right..
\end{align}
\end{definition}

The following lemma provides an equivalence between the service $(E,m)$ and its unit rate decomposition $\mathcal{D}(E,m)$ from an allocation perspective.

\begin{lemma} \label{lemma:unitrate}
For a service $(E,m)$ with  $E = km + \ell$, $\ell< m$, an allocation $u_t,t=1,\ldots,T$ satisfies
\[ \sum_{t=1}^T u_t = E, ~~  u_t \in \{0,1,\ldots, m\}\]
if and only if $u_t = \sum_{j=1}^{m} \nu^j_t$, $t =1,\ldots, T$ such that 
\begin{align}
& \sum_{t=1}^T \nu^j_t = k+1, ~\nu^j_t \in \{0,1\} ~~\mbox{for $j=1,2,\ldots,\ell^i$}, \notag \\
 & \sum_{t=1}^T \nu^j_t = k, ~\nu^j_t \in \{0,1\} ~~\mbox{for $j=\ell^i+1,\ldots, m^i$}
\end{align}
\end{lemma}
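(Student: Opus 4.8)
Since the claim is an equivalence I would treat the two implications separately, with essentially all the work in the converse. For the forward (easy) direction the plan is to substitute directly: given binary variables $\nu^j_t$ with the stated row sums, set $u_t = \sum_{j=1}^m \nu^j_t$. Interchanging the order of summation gives $\sum_t u_t = \ell(k+1) + (m-\ell)k = km+\ell = E$, and since each $u_t$ is a sum of $m$ indicators it automatically lies in $\{0,1,\ldots,m\}$; this is a one-line check and I expect no difficulty.

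For the converse I would give an explicit constructive decomposition rather than an abstract existence argument. The idea is to think of the $m$ unit-rate services as $m$ \emph{channels} labelled by the residues $0,1,\ldots,m-1$, and to fill them in a cyclic round-robin fashion. Concretely, I would maintain a counter $c$ initialized to $0$ and process the slots $t=1,\ldots,T$ in order; in slot $t$ I set $\nu^j_t=1$ for the $u_t$ channels whose labels are $c, c+1,\ldots,c+u_t-1$ taken modulo $m$, and then advance $c \leftarrow c+u_t$. By construction each column sums to $u_t$, so $u_t=\sum_j \nu^j_t$ as required.

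Two properties then need verification. First, feasibility within a slot: because the rate constraint gives $u_t \le m$, the block $c,\ldots,c+u_t-1$ wraps around at most once modulo $m$, so the touched channels are pairwise distinct and no channel receives more than one unit in a single slot, i.e. $\nu^j_t \in \{0,1\}$. Second, the row sums: as the counter sweeps from $0$ to $E = km+\ell$, the full sequence of labels touched across all slots is exactly $0,1,\ldots,E-1$ reduced modulo $m$; counting residues shows that labels $0,\ldots,\ell-1$ each occur $k+1$ times and labels $\ell,\ldots,m-1$ each occur $k$ times, which after relabelling to $j=1,\ldots,m$ is precisely the prescribed split, namely $\sum_t \nu^j_t = k+1$ for $j\le\ell$ and $\sum_t \nu^j_t = k$ for $j>\ell$.

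The hard part is really this converse direction, which is in essence a bipartite degree-sequence realizability statement of Gale--Ryser type; the delicacy is in meeting the binary per-slot constraint and the global row-sum targets simultaneously. The round-robin construction sidesteps invoking an abstract feasibility theorem, and I would emphasize that it uses exactly the two available hypotheses — $u_t \le m$ to keep each slot's fill legal, and $E = km+\ell$ to pin down how many channels must carry the extra unit — so the greedy fill never stalls.
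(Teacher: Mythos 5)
Your proof is correct, and the easy direction matches the paper exactly; but your converse uses a genuinely different construction. The paper proves the converse by \emph{iterative peeling}: assuming $\ell>0$, it notes that $u_t>0$ in at least $k+1$ slots and $u_t=m$ in at most $k$ slots, sets $\nu^1_t=1$ on the $k+1$ slots with the largest values of $u_t$, verifies that the residual $B_t=u_t-\nu^1_t$ lies in $\{0,1,\ldots,m-1\}$ with $\sum_t B_t=k(m-1)+(\ell-1)^+$, and repeats this extraction $m$ times. Your round-robin (wrap-around) fill instead constructs all $m$ channels in a single pass over the slots, and your two checks --- that $u_t\le m$ makes the $u_t$ touched labels in a slot pairwise distinct, and that the residues of $0,1,\ldots,E-1$ modulo $m$ give exactly $k+1$ hits to labels $0,\ldots,\ell-1$ and $k$ hits to the rest --- replace the paper's per-step residual bookkeeping by one global counting argument. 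Your version buys simplicity: no induction and no need to argue that the greedy residual remains feasible at every stage (which is the delicate point in the paper's proof, requiring that all slots with $u_t=m$ are among the selected ones). What the paper's version buys is methodological uniformity: its ``pick the slots with the largest remaining demand'' step is the same serve-the-least-lax-first idea reused in the allocation algorithm of Theorem 2, so the peeling proof foreshadows the machinery used later. Both arguments are complete and correct.
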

\begin{proof}
See Appendix \ref{sec:unitrate}.
\end{proof}

We define a unit-rate decomposition for a collection of services $\mathcal{C}$ as 
\begin{equation}
\mathcal{D}(\mathcal{C}) := \bigcup_{i=1}^n \mathcal{D}(E^i,m^i).
\end{equation}

The following result is a direct consequence of Lemma \ref{lemma:unitrate}.
\begin{lemma}\label{lemma:unitrate2}
The available supply $\vec{p}$ is adequate for set of services $\mathcal{C}$ if and only if it is adequate for $\mathcal{D}(\mathcal{C})$. 
\end{lemma}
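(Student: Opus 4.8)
The plan is to reduce the claim to a slot-by-slot application of Lemma~\ref{lemma:unitrate}, exploiting that the adequacy constraints couple the services only through the aggregate allocation $\sum_i u^i_t$ in each slot, and that this aggregate is left unchanged by the unit-rate decomposition. So the whole argument rests on a single invariance observation once Lemma~\ref{lemma:unitrate} is invoked service by service.

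For the forward implication, I would start from allocations $u^i_t$ witnessing adequacy of $\vec p$ for $\mathcal C$, so that $\sum_t u^i_t = E^i$, $u^i_t \in \{0,\ldots,m^i\}$, and $\sum_{i=1}^n u^i_t \le p_t$ for every $t$. Applying the ``only if'' direction of Lemma~\ref{lemma:unitrate} to each service $i$ separately yields unit-rate allocations $\nu^{i,j}_t \in \{0,1\}$ with $u^i_t = \sum_{j=1}^{m^i} \nu^{i,j}_t$ and the prescribed per-service energy totals. These $\{\nu^{i,j}\}$ are exactly candidate allocations for the services comprising $\mathcal D(\mathcal C)$. The decisive step is to note that the aggregate allocation per slot is preserved, $\sum_{i=1}^n \sum_{j=1}^{m^i} \nu^{i,j}_t = \sum_{i=1}^n u^i_t \le p_t$, so the supply constraint carries over verbatim and $\vec p$ is adequate for $\mathcal D(\mathcal C)$.

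For the reverse implication, I would run the same computation backward. Given feasible unit-rate allocations $\nu^{i,j}_t$ for $\mathcal D(\mathcal C)$ with $\sum_{i,j} \nu^{i,j}_t \le p_t$, I would define $u^i_t := \sum_{j=1}^{m^i} \nu^{i,j}_t$. The ``if'' direction of Lemma~\ref{lemma:unitrate} then certifies that each $u^i$ is a valid allocation for $(E^i,m^i)$, while $\sum_i u^i_t = \sum_{i,j} \nu^{i,j}_t \le p_t$ again transfers the supply bound, giving adequacy of $\vec p$ for $\mathcal C$.

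I do not anticipate a genuine obstacle: all the combinatorial content is already discharged by Lemma~\ref{lemma:unitrate}, which handles each individual service. The only thing needing care is clean bookkeeping — introducing a double index $(i,j)$ for the decomposed services and stating explicitly that the supply inequality constrains only the total allocation in each slot, which is precisely the quantity fixed by the identity $u^i_t = \sum_j \nu^{i,j}_t$. Once this invariance is highlighted, both implications follow immediately.
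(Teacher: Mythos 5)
Your proof is correct and is precisely the argument the paper intends: the paper states Lemma~\ref{lemma:unitrate2} as a ``direct consequence'' of Lemma~\ref{lemma:unitrate}, and your write-up simply makes explicit the service-by-service application of that lemma together with the observation that the per-slot aggregate $\sum_i u^i_t = \sum_{i,j}\nu^{i,j}_t$ is invariant under the decomposition, so the supply constraint transfers in both directions. Nothing is missing; your double-index bookkeeping even handles the point (glossed over in the paper) that $\mathcal{D}(\mathcal{C})$ should be read as a multiset union.
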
 

The utility of Lemma \ref{lemma:unitrate2} comes from the fact that all services in $\mathcal{D}(\mathcal{C})$ have the same rate constraint, $m=1$. The services in $\mathcal{D}(\mathcal{C})$ differ only in the amount of total energy requested. Therefore, we can write this collection of services  as 
\[ \mathcal{D}(\mathcal{C}) = \{ (E^j,1)| j=1,2,\ldots, N\}.\]
We will focus on the adequacy characterization for $\mathcal{D}(\mathcal{C})$. We first define a stronger version of adequacy that removes the possibility of excess energy.
\begin{definition}
  The available supply $\vec{p}$ is exactly adequate for the services in $\mathcal{D}(\mathcal{C})$ if there exist allocations $\nu^j_t, i=1,2,\ldots,N$, $t=1,2,\ldots,T$ that satisfy 
\[ \sum_{t=1}^T \nu^j_t = E^j, ~~  \nu^j_t \in \{0,1\}, ~j=1,\ldots,N,\]
\[ \sum_{j=1}^N \nu^j_t = p_t.\]
\end{definition}

The binary allocation decisions $\nu^j_t$ can now be interpreted as on-off decisions for supplying energy to the $j$th service in $\mathcal{D}(\mathcal{C})$ at time $t$. The constraint $\sum_{t=1}^T \nu^j_t = E^j$ can then be viewed as specifying the total duration for which energy supply for the $j$th service is turned on. We will therefore refer to $E^j$ as the duration requested by the $j$th service in $\mathcal{D}(\mathcal{C})$. 

We now define a \emph{demand-duration vector} for a collection of unit-rate services.
\begin{definition}\label{def:dvector}
For the collection of services $\mathcal{D}(\mathcal{C})$, let $d_t$ be number of services with requested duration $E^j \geq t$, that is,
\[ d_t := \sum_{j=1}^N \mathds{1}_{\{ E^j \geq t\}}.\]
The vector $\vec{d} = (d_1,d_2,\ldots,d_T)$ is called the demand-duration vector for $\mathcal{D}(\mathcal{C})$.
\end{definition}

We will characterize adequacy in terms of a majorization relation between the supply $\vec{p}$ and the demand-duration vector $\vec{d}$. We need the following definitions.

\begin{definition}[Majorization]{\rm \label{def:majorization}
Let $\VEC a = (a_1,\cdots,a_T)$ and $\VEC b = (b_1,\cdots,b_T)$ be two non-negative  vectors. Denote by $\VEC a^{\downarrow}, \VEC b^{\downarrow}$ the non-increasing rearrangements of $\VEC a$ and $\VEC b$ respectively. 
We say that $\VEC a$ majorizes $\VEC b$, written $\VEC a \lessthan \VEC b$, if 
\bromalist
\item $\sum_{s=t}^T a^{\downarrow}_s \leq \sum_{s=t}^T b^{\downarrow}_s$, for $t=1,2,\ldots,T$, and 
\item $\sum_{s=1}^T a^{\downarrow}_s= \sum_{s=1}^T b^{\downarrow}_s$.
\end{list}
If only the first condition holds, we say that $\VEC a$ weakly majorizes $\VEC b$, written 
$\VEC a \lessthan^{w} \VEC b$.}
\end{definition}

\begin{remark}{\rm The inequalities in our definition of majorization are reversed from standard use in majorization theory. 
This departure from convention allows us to write our adequacy conditions
as $\VEC d \lessthan \VEC p$ and $\VEC d \lessthan^w \VEC p$  which resemble the more familiar adequacy condition of demand being ``less than'' supply .}
\end{remark}

\begin{theorem}[\sc Adequacy]\label{thm:two}
Let $\vec{d}$ be the demand-duration vector for the collection of services $\mathcal{D}(\mathcal{C})$. Then, 
\balphlist
\item The supply $\vec{p}$  is exactly adequate for $\mathcal{D}(\mathcal{C})$ if and only if $\VEC d \lessthan \VEC p$.
\item The supply  $\VEC p$  is adequate for $\mathcal{D}(\mathcal{C})$ if and only if 
 $\VEC d \lessthan^{w} \VEC p$.
 \end{list}
\end{theorem}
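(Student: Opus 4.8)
The plan is to recognize the exact-adequacy condition in part (a) as the existence of a $0$--$1$ array with prescribed row and column sums, and to prove it by a majorization-transfer argument; part (b) will then follow by reduction to (a). First I would reformulate exact adequacy: the binary variables $\nu^j_t$, arranged as an $N\times T$ array, must have row sums $\sum_t \nu^j_t = E^j$ and column sums $\sum_j \nu^j_t = p_t$. By Definition~\ref{def:dvector} the demand--duration vector is precisely the conjugate of the duration sequence $(E^1,\ldots,E^N)$, since $d_t = \#\{j : E^j \ge t\}$; note also that $\vec d$ is already non-increasing, so $\vec d^{\downarrow}=\vec d$, and $\sum_t d_t = \sum_j E^j$.

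Necessity in (a) I would obtain by double counting. Fix $k$ and let $S$ be the $k$ slots of largest supply, so the number of ones in those columns is $\sum_{s=1}^k p^{\downarrow}_s$. Each service contributes at most $\min(E^j,k)$ ones to any $k$ columns, and $\sum_j \min(E^j,k)=\sum_{t=1}^k d_t$; hence $\sum_{s=1}^k p^{\downarrow}_s \le \sum_{t=1}^k d_t$ for every $k$, with equality at $k=T$ since the totals agree. Subtracting from the common total turns this into condition (i) of $\vec d \prec \vec p$, while the matching totals give condition (ii).

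The main work is sufficiency, where the transfer characterization of majorization is the right tool. When $\vec p = \vec d$ the ``left-packed'' allocation $\nu^j_t = \mathds{1}_{\{t \le E^j\}}$ is exactly adequate, so $\vec d$ itself is realizable. Since $\vec d \prec \vec p$ means, in the usual orientation, that $\vec d$ majorizes $\vec p$, I can write $\vec p$ as the result of applying to $\vec d$ a finite sequence of single-unit Robin Hood transfers, each moving one unit of supply from a strictly larger coordinate $a$ to a strictly smaller coordinate $b$. I would then show each transfer preserves exact adequacy: given a realizing array for the current supply, the set of services switched on in slot $a$ has strictly more members than the set switched on in slot $b$, so some service is on in $a$ and off in $b$; toggling that service (off in $a$, on in $b$) leaves its row sum fixed and shifts one unit of column mass from $a$ to $b$. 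Induction along the chain yields an exactly adequate allocation for $\vec p$. The step needing most care is invoking the correct discrete form of the transfer decomposition, so that every intermediate vector is a nonnegative integer vector and each source coordinate strictly exceeds its destination, which is exactly what guarantees the toggling service exists.

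For part (b) I would reduce to (a). If $\vec p$ is adequate, the realized column sums $\tilde p_t := \sum_j \nu^j_t \le p_t$ define an exactly adequate vector $\vec{\tilde p}$, so $\vec d \prec \vec{\tilde p}$ by (a); since $\tilde p_t \le p_t$ componentwise, the sum of the smallest $r$ entries of $\vec{\tilde p}$ is at most that of $\vec p$ for every $r$, and combining this with condition (i) for $\vec d \prec \vec{\tilde p}$ gives $\vec d \prec^{w} \vec p$. Conversely, $\vec d \prec^{w} \vec p$ forces $\sum_t p_t \ge \sum_t d_t$, and by the standard weak-majorization lemma (that $\vec x \prec^{w} \vec y$ iff $\vec x \prec \vec z$ for some $\vec z \le \vec y$) there is a vector $\vec{\tilde p}$ with $\vec{\tilde p}\le \vec p$, total $\sum_t \tilde p_t = \sum_t d_t$, and $\vec d \prec \vec{\tilde p}$; part (a) makes $\vec{\tilde p}$ exactly adequate, and because $\tilde p_t \le p_t$ the same allocation satisfies $\sum_j \nu^j_t \le p_t$, so $\vec p$ is adequate. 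This closes both parts.
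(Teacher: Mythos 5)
Your part (a) is correct and is essentially the paper's own argument: the same double-counting/conjugate-sequence inequality for necessity (the paper bounds $\sum_{t\in\mathcal S}\nu^j_t \le \min(|\mathcal S|,E^j)$ and sums over $j$, exactly your computation), and for sufficiency the same three ingredients --- the base case $\vec p=\vec d$ with the left-packed allocation $\nu^j_t=\mathds{1}_{\{t\le E^j\}}$, the decomposition of the majorization into single-unit Robin Hood transfers, and the toggling argument showing each transfer preserves exact adequacy. The only difference is that the paper proves the discrete transfer decomposition from scratch (its Appendix A, Lemmas 2--4 and Claim 1), whereas you cite it as a known discrete (Muirhead-type) fact; that is a legitimate citation, and you correctly identify it as the step where integrality and strictness of the transfers matter.

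Part (b) is where you genuinely diverge, in both directions. For necessity, the paper runs a second direct counting argument (lower-bounding $\sum_{t\in\mathcal S}\nu^j_t$ by $\max(E^j-s+1,0)$), while you reduce to (a) via the realized column sums $\tilde p_t=\sum_j\nu^j_t\le p_t$; your route is shorter and reuses (a) cleanly. For sufficiency, the paper pads the demand side: it adds $\Delta=\sum_t p_t-\sum_t d_t$ fictitious unit-rate services of duration $1$, checks $\vec d^{\Delta}\lessthan\vec p$, and applies (a); you instead shrink the supply side, invoking the lemma that $\vec d\lessthan^w\vec p$ iff $\vec d\lessthan\vec z$ for some $\vec z\le\vec p$. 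These reductions are dual, but yours carries a caveat you do not address: part (a) can only be applied to an \emph{integer} supply vector, since exact adequacy forces $\tilde p_t=\sum_j\nu^j_t$ to be a non-negative integer, and the weak-majorization lemma in its standard form (Marshall--Olkin 5.A.9.a) is stated for real vectors. You need a discrete version producing an integer $\vec z$; this is true and provable (e.g., repeatedly subtract $1$ from a largest entry of $\vec p$, and show that a tight tail constraint together with a tie among the top entries would force the totals to be equal, so the reduction preserves $\vec d\lessthan^w\cdot$ while slack remains), but it is exactly the kind of step that needs a proof rather than a citation in this integer setting. The paper's fictitious-services trick buys immunity from this issue, since padding $d_1$ by the integer $\Delta$ keeps everything integral for free.
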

\proof{Proof}
 See Appendix \ref{sec:two}.
\endproof
\begin{remark}
It follows from the definition of $\vec d$ that $d_t \geq d_{t+1}$. Thus, $\vec d = \vec d^{\downarrow}$.
\end{remark}
 Given an adequate supply $\vec{p}$ for $\mathcal{D}(\mathcal{C})$, the following result provides an algorithm that finds the  allocations $\nu^j_t$.

\begin{theorem}\label{thm:allocation}
Suppose $\vec{p}$ is adequate for 
\[  \mathcal{D}(\mathcal{C}) = \{ (E^j,1)| j=1,2,\ldots, N\}.\]
Construct $\nu^j_t$ sequentially as follows:
\bromalist
  \item At time $t=1$, $x^j_1 := T-E^j$. Let $\pi$ be a permutation of $\{1,\ldots N\}$ such that $\pi(j) <\pi(k)$ implies $x^j_1 \leq x^{k}_1$. Set
  \begin{align}
  \nu^j_1 = \left\{ \begin{array}{ll}
  1 & \mbox{$\pi(j) \leq p_1$ and $E^j >0$}, \\
  0 & \mbox{otherwise}
  \end{array}
  \right. \label{eq:allocation1}
  \end{align}
  
  \item At time $t$,  $x^j_t := T -t+1-\left(E^j -\sum_{s=1}^{t-1}\nu^j_s\right)$. Let $\pi$ be a permutation of $\{1,\ldots N\}$ such that $\pi(j) <\pi(k)$ implies $x^j_t \leq x^{k}_t$. Set
  \begin{align}
  \nu^j_t = \left\{ \begin{array}{ll}
  1 & \mbox{$\pi(j) \leq p_t$ and $(E^j-\sum_{s=1}^{t-1}\nu^j_s) >0$}, \\
  0 & \mbox{otherwise}
  \end{array}
  \right. \label{eq:allocationt}
  \end{align} 
   \end{list}
   Then,  \[ \sum_{t=1}^T \nu^j_t = E^j, \quad \sum_{j=1}^N \nu^j_t \leq p_t. \]
\end{theorem}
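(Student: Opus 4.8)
The plan is to verify the two asserted properties separately, since they are of very different difficulty. The supply constraint $\sum_{j=1}^N \nu^j_t \le p_t$ is immediate from the construction: at slot $t$ the rule \eqref{eq:allocationt} sets $\nu^j_t=1$ only for indices with $\pi(j)\le p_t$, so at most $p_t$ services are switched on. All the content lies in showing full service, $\sum_{t=1}^T \nu^j_t = E^j$. Writing $e^j_t := E^j-\sum_{s=1}^{t-1}\nu^j_s$ for the residual duration entering slot $t$, observe that the algorithm's quantity $x^j_t=(T-t+1)-e^j_t$ is the \emph{laxity} (slots remaining minus work remaining), so sorting by increasing $x^j_t$ is the same as sorting by decreasing $e^j_t$. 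Hence at each slot the rule serves the $p_t$ services of largest residual demand, skipping any with $e^j_t=0$.

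My approach is induction over the slots, carrying the invariant that the residual problem remains adequate. Let $\vec d^{(t)}$ be the demand-duration vector of the residuals $\{e^j_t\}$ over the window $\{t,\dots,T\}$, i.e. $d^{(t)}_k=\#\{j:e^j_t\ge k\}$, and let $\vec p^{(t)}=(p_t,\dots,p_T)$. I will show that $\vec d^{(t)}\lessthan^{w}\vec p^{(t)}$ implies $\vec d^{(t+1)}\lessthan^{w}\vec p^{(t+1)}$; by Theorem~\ref{thm:two}(b) this is exactly preservation of adequacy, and the base case $\vec d^{(1)}\lessthan^{w}\vec p^{(1)}$ is the hypothesis that $\vec p$ is adequate. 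For the step, let $c:=\min(p_t,d^{(t)}_1)$ be the number of services actually served. Serving the $c$ longest rows by one unit removes one cell from each, lying in a column whose index is that row's current length; counting removed cells in columns $\ge k$ gives the identity $\sum_{s\ge k} d^{(t+1)}_s=\sum_{s\ge k}d^{(t)}_s-\min(c,d^{(t)}_k)$, valid regardless of how laxity ties are broken. Abbreviating $D_k:=\sum_{s\ge k}d^{(t)}_s$, $Q_k:=\sum_{s=k}^{T-t+1}(\vec p^{(t)})^{\downarrow}_s$, and writing $Q'_k$ for the analogous supply tail sum of $\vec p^{(t+1)}$, the invariant reads $D_k\le Q_k$ and the goal is $D_k-\min(c,d^{(t)}_k)\le Q'_k$ for all $k$ (noting $\min(c,d^{(t)}_k)=\min(p_t,d^{(t)}_k)$ since $d^{(t)}_k\le d^{(t)}_1$).

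The crux, and the step I expect to be the main obstacle, is the mismatch between the algorithm keying off the \emph{current} value $p_t$ and adequacy being a statement about the \emph{sorted} supply multiset; so I must track how the sorted tail sums change when the single value $p_t$ is deleted from $\vec p^{(t)}$. Letting $r$ be the rank of $p_t$ in the non-increasing rearrangement, so $(\vec p^{(t)})^{\downarrow}_r=p_t$, a short computation yields $Q'_k=Q_{k+1}$ for $k\ge r$ and $Q'_k=Q_k-p_t$ for $k<r$. I then finish by a case split. For $k\ge r$ one has $(\vec p^{(t)})^{\downarrow}_k\le p_t$, and the target reduces to $(d^{(t)}_k-p_t)^+\le Q_{k+1}-D_{k+1}$, which holds trivially if $d^{(t)}_k\le p_t$, and otherwise follows from the level-$k$ invariant via $d^{(t)}_k-p_t\le d^{(t)}_k-(\vec p^{(t)})^{\downarrow}_k\le Q_{k+1}-D_{k+1}$. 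For $k<r$ one has $(\vec p^{(t)})^{\downarrow}_k\ge p_t$: if $d^{(t)}_k\ge p_t$ the bound is just $D_k\le Q_k$, while if $d^{(t)}_k<p_t$ it follows from $D_{k+1}\le Q_{k+1}\le Q_k-p_t$.

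Finally I would record feasibility and termination. Adequacy at the bottom level forces the smallest residual supply to dominate the largest residual duration, $d^{(t)}_{T-t+1}\le(\vec p^{(t)})^{\downarrow}_{T-t+1}\le p_t$, whence $c=\min(p_t,d^{(t)}_1)\ge d^{(t)}_{T-t+1}$; thus every maximal-length row is among the $c$ served rows, and the residual problem genuinely fits in the shortened window $\{t+1,\dots,T\}$. Iterating to $t=T$, the invariant gives $d^{(T)}_1\le p_T$, so at the last slot every service with residual demand (necessarily equal to $1$) is switched on and driven to zero. Hence $e^j_{T+1}=0$, i.e. $\sum_{t=1}^T\nu^j_t=E^j$ for every $j$, which together with the supply bound completes the proof.
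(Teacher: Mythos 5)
Your proof is correct, but it takes a genuinely different route from the paper's. The paper proves Theorem~\ref{thm:allocation} by an exchange (swapping) argument: adequacy guarantees \emph{some} feasible allocation $b^j_t$, and the proof transforms it, one time slot at a time, into the least-laxity-first allocation --- swapping a served/unserved pair between slot $t$ and a later slot $s$, and adding extra served slots when $|\mathcal{B}_1| < |\mathcal{V}_1|$ --- while preserving feasibility; it never invokes the majorization characterization of adequacy. You instead run an induction on time slots with the invariant that the residual problem stays adequate in the sense of Theorem~\ref{thm:two}(b), i.e.\ $\vec d^{(t)} \lessthan^w \vec p^{(t)}$, which is legitimate (no circularity: the paper proves Theorem~\ref{thm:two} independently of Theorem~\ref{thm:allocation}). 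Your key ingredients --- the cell-counting identity $\sum_{s\ge k} d^{(t+1)}_s=\sum_{s\ge k}d^{(t)}_s-\min(c,d^{(t)}_k)$, valid under any tie-breaking, the bookkeeping of sorted supply tail sums when $p_t$ is deleted ($Q'_k=Q_{k+1}$ for $k\ge r$, $Q'_k=Q_k-p_t$ for $k<r$), and the bottom-level inequality $d^{(t)}_{T-t+1}\le p_t$ forcing maximal-length residuals to be served so the problem fits in the shrinking window --- all check out, and the final step $d^{(T)}_1\le p_T$ correctly yields $\sum_t \nu^j_t = E^j$. What each approach buys: the paper's swap argument is more elementary (it uses only the definition of adequacy) but its ``proceed one service at a time'' steps are left somewhat informal; your invariant argument is more quantitative and self-verifying, makes the causality of the policy transparent, exhibits exactly how the slack $Q_k - D_k$ evolves, and is closer in spirit to how the paper later analyzes the purchase policy $g^*$ (Claim~\ref{claim:2} tracks a similar potential $F(\vec p(t))$), at the cost of leaning on the machinery of Theorem~\ref{thm:two}.
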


\proof
See Appendix \ref{sec:app_3}.
\endproof
 \begin{remark}
 The quantity $x^j_t$ can be interpreted as the \emph{laxity}  associated with service $j$ at time $t$. The allocation algorithm of Theorem \ref{thm:allocation} is therefore a least laxity first algorithm \cite{anand2012}.
 \end{remark}
Note that even though we assumed that the supply $\vec{p}$ is known in advance, the allocation algorithm described in Theorem \ref{thm:allocation} is actually \emph{causal}, that is, the allocations $\nu^j_t$ at time $t$ do not depend on future supply $p_{t+1},\ldots,p_T$.  This causality will be crucial for constructing the allocation policies $h^i$ described in Section \ref{sec:PF}.

\subsection{Step 2: Energy gap for adequacy}
Consider  a given set of services $\mathcal{C}$, day ahead energy supply $\vec{y}$. Assume that the renewable supply is known a priori to be $\vec{r}$ and let $\vec{p} = \vec{y} + \vec{r}$. The supply vector $\vec{p}$ may or may not be adequate for $\mathcal{D}(\mathcal{C})$. 

Define
\begin{equation} \label{eq:LBfunc}
L(\mathcal{C},\vec{y},\vec{r}) := \left[\max_{1 \leq t \leq T} \sum_{ s = t}^{T} (d_s -p^{\downarrow}_s)\right]^+
\end{equation}
where $\vec{p} = \vec{y} + \vec{r}$ and $\vec{d}$ is the demand duration vector associated with $\mathcal{D}(\mathcal{C})$. The following lemma establishes that $L(\mathcal{C},\vec{y},\vec{r})$ is a lower bound on the  amount of additional energy that must be added to $\vec{p}$ to achieve adequacy.

\begin{lemma} \label{lemma:LB}
Let $\vec{q} \geq \vec{p}$ (that is, $q_t \geq p_t$ for all $t$) be any vector that satisfies $\vec d \lessthan^w \vec q$. Then,
\begin{equation}
\sum_{t=1}^T (q_t-p_t) \geq  \left[\max_{1 \leq r \leq T} \sum_{ s = r}^{T} (d_s -p^{\downarrow}_s)\right]^+
\end{equation}
\end{lemma}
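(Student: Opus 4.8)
The plan is to establish, for each fixed index $r \in \{1,\ldots,T\}$, the single bound
\[
\sum_{t=1}^T (q_t - p_t) \;\geq\; \sum_{s=r}^T (d_s - p^{\downarrow}_s),
\]
and then take the maximum over $r$. Since the left-hand side is non-negative (because $\vec q \geq \vec p$ componentwise), dominating the unrestricted maximum automatically dominates its positive part $[\,\cdot\,]^+$, which is exactly the claimed inequality.

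First I would rewrite the left side in terms of sorted vectors. Because a total sum is invariant under rearrangement, $\sum_{t=1}^T (q_t - p_t) = \sum_{t=1}^T q^{\downarrow}_t - \sum_{t=1}^T p^{\downarrow}_t$. After moving the $\vec p$ terms across, the target inequality for a fixed $r$ becomes equivalent to
\[
\sum_{t=1}^T q^{\downarrow}_t - \sum_{s=r}^T d_s \;\geq\; \sum_{t=1}^{r-1} p^{\downarrow}_t .
\]
Now I would invoke the two hypotheses in turn. The weak majorization $\vec d \lessthan^w \vec q$, together with the fact $\vec d = \vec d^{\downarrow}$ noted in the remark, gives $\sum_{s=r}^T d_s \leq \sum_{s=r}^T q^{\downarrow}_s$; hence the left side above is at least $\sum_{t=1}^T q^{\downarrow}_t - \sum_{s=r}^T q^{\downarrow}_s = \sum_{t=1}^{r-1} q^{\downarrow}_t$. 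It then remains only to compare $\sum_{t=1}^{r-1} q^{\downarrow}_t$ with $\sum_{t=1}^{r-1} p^{\downarrow}_t$.

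The one step requiring a genuine (though short) argument is the auxiliary fact that componentwise domination $\vec q \geq \vec p$ implies the sorted domination $q^{\downarrow}_t \geq p^{\downarrow}_t$ for every $t$. I would prove this by counting: at least $t$ of the indices $i$ satisfy $p_i \geq p^{\downarrow}_t$, and each such index also satisfies $q_i \geq p_i \geq p^{\downarrow}_t$, so at least $t$ entries of $\vec q$ are $\geq p^{\downarrow}_t$, which forces the $t$-th largest entry $q^{\downarrow}_t$ to be $\geq p^{\downarrow}_t$. Summing this over $t = 1,\ldots,r-1$ yields $\sum_{t=1}^{r-1} q^{\downarrow}_t \geq \sum_{t=1}^{r-1} p^{\downarrow}_t$, closing the chain for fixed $r$. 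Taking the maximum over $r$ and using non-negativity of the left side completes the proof. I expect this sorted-domination lemma to be the only part beyond routine bookkeeping; the rest is rearrangement of sums and a direct application of the majorization hypothesis.
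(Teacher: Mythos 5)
Your proof is correct, but it follows a different mechanism than the paper's. The paper's proof never sorts $\vec q$: it restricts the sum $\sum_{t=1}^T(q_t-p_t)$ to an arbitrary subset $\mathcal{S}\subset\{1,\ldots,T\}$ (dropping the non-negative terms outside $\mathcal{S}$), bounds $\sum_{t\in\mathcal{S}}q_t$ from below by the tail sum $\sum_{s=T-|\mathcal{S}|+1}^T d_s$ via the weak majorization hypothesis, and then optimizes over all subsets, partitioning them by cardinality so that $\min_{|\mathcal{S}|=T-r+1}\sum_{t\in\mathcal{S}}p_t=\sum_{s=r}^T p^{\downarrow}_s$ produces the supply-duration tail sums in the bound. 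You instead fix $r$, pass to sorted rearrangements on both sides, apply the weak majorization tail-sum inequality directly to $q^{\downarrow}$, and close the gap with the auxiliary fact that componentwise domination $\vec q\geq\vec p$ implies sorted domination $q^{\downarrow}_t\geq p^{\downarrow}_t$ for every $t$ (your counting proof of this fact is sound, and your handling of the positive part via non-negativity of the left side matches the paper's). The trade-off: your route isolates a clean, reusable lemma (sorted domination) and avoids any optimization over subsets, arguably making the per-$r$ inequality more transparent; the paper's route needs no fact about $q^{\downarrow}$ at all and has the structural advantage that its subset formulation mirrors how the same quantity is manipulated later (the function $F(\vec x)$ in the proof of Theorem \ref{thm:main_thm} is defined exactly as a max over subsets), so the two arguments share machinery. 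Both are complete and elementary.
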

\begin{proof}
See Appendix \ref{sec:LB}.
\end{proof}
\begin{remark}
The vector $\vec p^{\downarrow}$ is referred to as the supply-duration vector.
\end{remark}
\subsection{Step 3: Lower bound on the cost of real-time energy}
We now return to the problem of finding real-time purchase and allocation policies $g,h^i,i=1,\ldots,n$. Consider any choice of policies  that satisfy the constraints in \eqref{eq:constraints} for all realizations of renewable supply. Consider a realization $\vec{r}$ of the renewable supply. Then, the total supply under the real-time purchase policy $g$ is 
\begin{equation}
q_t = y_t+r_t+g_t(\mathcal{C},\vec{y},r_1,\ldots,r_{t}), ~~ t=1,\ldots,T.
\end{equation}
Let $u^i_t$ be the realizations of the allocation decisions under the allocation policy $h^i,i=1,\ldots,n$. Since we know that the policies $g,h^i,i=1,\ldots,n$ satisfy the constraints \eqref{eq:constraints} for all realizations of renewable supply, it follows that 
\begin{equation}
\begin{aligned} 
 &\sum_t u^i_t = E^i, ~~  u^i_t \in \{0,1,\ldots, m^i\}, ~i=1,\ldots,n,  \\
 &~~~~~\sum_{i=1}^n u^i_t \leq q_t =y_t +r_t+g_t(\mathcal{C},\vec{y},r_1,\ldots,r_{t}),
\end{aligned}
\end{equation}
Thus the supply $\vec{q} = (q_1,\ldots,q_T)$ is adequate for $\mathcal{C}$. By Lemma \ref{lemma:LB}, it follows that 
\begin{equation}
\sum_{t=1}^T (q_t-p_t) \geq  \left[\max_{1 \leq r \leq T} \sum_{ s = r}^{T} (d_s -p^{\downarrow}_s)\right]^+, \label{eq:step3eq1}
\end{equation}
where $p_t=y_t+r_t$. \eqref{eq:step3eq1} implies that 
\begin{equation}
\sum_{t=1}^T g_t(\mathcal{C},\vec{y},r_1,\ldots,r_{t})\geq  \left[\max_{1 \leq r \leq T} \sum_{ s = r}^{T} (d_s -p^{\downarrow}_s)\right]^+.
\end{equation}
In other words, for each realization $\vec{r}$, the function $L(\mathcal{C},\vec{y},\vec{r})$ is a lower bound on the total amount of  real-time energy purchased.

\subsection{Step 4: Construction of optimal policies}

Consider the collection of services $\mathcal{C}$ with the associated decomposition $\mathcal{D}(\mathcal{C})$ and  the demand-duration vector $\vec{d}$ and the day-ahead energy $\vec{y}$. Consider the following real-time purchase policy, $g^*$:
\bromalist
\item  At $t=1$, for a realization $r_1$ of $R_1$, the real-time purchase is $a_1 = g^*_1(\mathcal{C},\vec y, r_1) :=  (d_T - y_1-r_1)^+ $.
\item At time $t$, for a realization $r_1,\ldots,r_t$ of renewable supply and  the realization $a_1,\ldots, a_{t-1}$ of past real-time purchases, $a_t = g^*_t(\mathcal{C},\vec y,r_1,\ldots,r_t )$ is defined as the solution of the following minimization problem
\begin{align} \nonumber
 &\min_{a_t \geq 0} a_t  \\
 s.t.~~&(y_1+r_1+a_1,\ldots,y_{t-1}+r_{t-1}+a_{t-1},y_t+r_t+a_t)  \notag \\&~~~~~ \morethan_w (d_{T-t+1},d_{T-t+2},\ldots,d_T), \label{eq:rt_purch}
\end{align}
where, for $s<t$, $a_s = g^*_s(\mathcal{C},\vec y,r_1,\ldots,r_s)$.
\end{list}
Note that the majorization relation in \eqref{eq:rt_purch} is between two $t-$ dimensional vectors.
\begin{theorem}\label{thm:main_thm}
\begin{enumerate}
\item For every realization $\vec{r}$ of renewable supply, the policy $g^*$ generates a total supply vector $\vec{q} = (q_1,\ldots,q_T)$, with $q_t = y_t +r_t + g^*_t(\mathcal{C}, \vec y, r_1,\ldots,r_t)$, that satisfies
\[ \vec{d} \lessthan^w \vec q\]
\item Consider the virtual set of services $\mathcal{D}(\mathcal{C})$. The allocation decisions $\nu^j_t$ for the virtual services  in $\mathcal{D}(\mathcal{C})$ are obtained using the (causal) allocation algorithm of Theorem \ref{thm:allocation} by replacing $p_t$ with $q_t$.  For a service $(E^i,m^i) $ in the set of actual services $\mathcal{C}$,  
\[ \mathcal{D}(E^i,m^i) = \{ (E^{j_1},1), \ldots, (E^{j_{m^i}},1) \} \subset \mathcal{D}(\mathcal{C}). \]
The actual allocation decision for this service is $u^i_t = \sum_{k=1}^{m^i} \nu^{j_k}_t$.
The allocations satisfy the constraints \eqref{eq:constraints} for each realization of $\vec{r}$.

\item For each $\vec{r}$, the total amount of real-time purchases under $g^*$ is equal to the lower bound $L(\mathcal{C},\vec{y},\vec{r})$. Thus, $g^*$ is optimal and 
\begin{align}
V(\mathcal{C},\vec{y}) &= \mathds{E}_{\vec{R}} [L(\mathcal{C},\vec{y},\vec{R})] \notag \\
&= \mathds{E}_{\vec{R}}\left[\max_{1 \leq t \leq T} \sum_{ s = t}^{T} (d_s -P^{\downarrow}_s)\right]^+,
\end{align}
where $\vec{P} = \vec{y} + \vec{R}$.
\end{enumerate}
\end{theorem}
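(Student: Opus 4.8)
The plan is to handle the three claims in order: establish the weak-majorization property of the greedy supply first, use it to legitimize the allocation of claim~2, and finally prove optimality (claim~3), whose upper bound on the total purchased energy is the substantive difficulty.

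For claim~1, I would first argue that the recursive definition of $g^*$ is well posed, i.e. that the minimization in \eqref{eq:rt_purch} is feasible at every $t$ and for every history. Feasibility is a monotonicity observation: in the reversed convention of Definition~\ref{def:majorization}, the constraint $(q_1,\ldots,q_t) \morethan_w (d_{T-t+1},\ldots,d_T)$ asks only that, for each $\ell$, the sum of the $\ell$ smallest entries of the partial supply dominate the sum of the $\ell$ smallest entries of the demand tail. Since the map that returns the sum of the $\ell$ smallest coordinates of a vector is nondecreasing in each coordinate, raising $q_t = y_t+r_t+a_t$ by increasing $a_t$ can only increase every such partial sum, so a sufficiently large $a_t$ satisfies all constraints and the minimizing nonnegative (finite) $a_t$ is attained. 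Because feasibility at step $t$ never precludes feasibility at step $t+1$ (one can always buy more later), $g^*$ is defined for all $t$. Claim~1 is then precisely the constraint active at the terminal step $t=T$: there $(d_{T-t+1},\ldots,d_T)$ is the full vector $\vec d$, so $\vec q \morethan_w \vec d$, i.e. $\vec d \lessthan^w \vec q$.

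Claim~2 follows by chaining the structural results already established. By claim~1 and Theorem~\ref{thm:two}(b), $\vec d \lessthan^w \vec q$ makes $\vec q$ adequate for $\mathcal{D}(\mathcal{C})$, so Theorem~\ref{thm:allocation} (with $p_t$ replaced by $q_t$) produces binary allocations $\nu^j_t$ with $\sum_t \nu^j_t = E^j$ and $\sum_j \nu^j_t \le q_t$. Aggregating the $m^i$ unit-rate services of $\mathcal{D}(E^i,m^i)$ gives $u^i_t = \sum_{k=1}^{m^i}\nu^{j_k}_t \in \{0,\ldots,m^i\}$ and $\sum_t u^i_t = E^i$ by Lemma~\ref{lemma:unitrate}, while $\sum_i u^i_t = \sum_j \nu^j_t \le q_t = y_t+r_t+a_t$ supplies the remaining constraint of \eqref{eq:constraints}. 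The extra point to check is admissibility as \emph{policies}: since $q_t$ depends on $\vec r$ only through $r_1,\ldots,r_t$ and the allocation algorithm of Theorem~\ref{thm:allocation} is causal, the induced maps $g^*_t$ and $h^i_t$ depend only on $(\mathcal{C},\vec y, r_1,\ldots,r_t)$, as required.

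For claim~3, the lower bound is immediate: by claims~1 and~2 the pair $(g^*,h)$ is feasible, so the Step~3 argument (Lemma~\ref{lemma:LB} applied to $\vec q \ge \vec p$) already gives $\sum_t a_t \ge L(\mathcal{C},\vec y,\vec r)$. The crux is the matching upper bound $\sum_t a_t \le L(\mathcal{C},\vec y,\vec r)$, i.e. that the causal greedy rule never over-purchases relative to the clairvoyant optimum $L$. The difficulty is that $L$ is expressed through the sorted supply $\vec p^{\downarrow}$ and the tail sums $\sum_{s=t}^T(d_s - p^{\downarrow}_s)$, whereas $g^*$ buys in time order. I would prove the bound by induction on $t$, tracking the cumulative purchase $G_t := \sum_{s=1}^t a_s$ and showing that whenever $g^*$ buys ($a_t>0$) some lower-majorization constraint in \eqref{eq:rt_purch} is tight with $q_t$ among its binding (smallest) entries; minimality of $a_t$ then forces $G_t$ to equal the minimal additional energy for the $t$-step subproblem, and at $t=T$ this collapses, via weak majorization against the full $\vec d$, to the tail-maximum defining $L$. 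Combining the two bounds yields $\sum_t a_t = L(\mathcal{C},\vec y,\vec r)$ for each $\vec r$, so $g^*$ attains the Step~3 lower bound pointwise and is optimal; taking expectations over $\vec R$ gives $V(\mathcal{C},\vec y) = \mathds{E}_{\vec R}[L(\mathcal{C},\vec y,\vec R)]$. I expect this no-overshoot step, reconciling the causal time order of $g^*$ with the sorted structure of $L$, to be the main obstacle, most plausibly handled by an exchange argument comparing $\vec q^*$ with an arbitrary clairvoyant minimizer.
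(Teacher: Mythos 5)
Your claims 1 and 2 are handled correctly and essentially as the paper does: claim 1 follows once each stage's minimization in \eqref{eq:rt_purch} is feasible (the paper runs an induction on the dimension of the weak-majorization constraint; your monotonicity observation plus reading off the terminal constraint at $t=T$ amounts to the same thing), and claim 2 is, as you say, just Theorem~\ref{thm:two}(b) plus Theorem~\ref{thm:allocation} plus Lemma~\ref{lemma:unitrate}, with causality noted. The lower bound $\sum_t a_t \geq L(\mathcal{C},\vec y,\vec r)$ via Lemma~\ref{lemma:LB} is also fine.

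The gap is the upper (no-overshoot) bound in claim 3 --- you have correctly diagnosed it as the crux, but your proposal stops at a plan rather than a proof. The assertion ``minimality of $a_t$ then forces $G_t$ to equal the minimal additional energy for the $t$-step subproblem'' is precisely what needs proving, and it does not follow from minimality alone: the constraints in \eqref{eq:rt_purch} are expressed through the already-augmented supplies $q_s = p_s + a_s$ for $s<t$, whereas the $t$-step clairvoyant minimum is expressed through the raw supplies $p_s$; to pass between them one must show that every unit purchased at an earlier slot still counts in full against the currently binding deficiency, i.e.\ that no purchased unit ever sits in a slot lying outside every maximizing deficiency set. That combinatorial fact is the actual content of the paper's argument. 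The paper proves it with a potential function: setting $F(\vec x) = \bigl[\max_{\mathcal{S}} \bigl(\sum_{s=T-|\mathcal{S}|+1}^T d_s - \sum_{s\in\mathcal{S}} x_s\bigr)\bigr]^+$ and letting $\vec p(t)$ be the supply after the first $t$ purchases, it establishes the exact telescoping identity $F(\vec p(t)) = F(\vec p(t-1)) - a_t$ (Claim~\ref{claim:2} in Appendix~\ref{sec:main_thm}). Proving this identity requires identifying a prefix-deficiency maximizer $\mathcal{T}^* \ni t$ and showing (Facts 1--3 there, which rely on $\vec d$ being non-increasing) that maximizers of both $F(\vec p(t-1))$ and $F(\vec p(t))$ can be chosen to contain $\mathcal{T}^*$, so the purchase $a_t$ decrements the global deficiency one-for-one; telescoping then gives $\sum_t a_t = F(\vec p(0)) - F(\vec p(T)) = L - 0$. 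Your proposed ``exchange argument comparing $\vec q^*$ with an arbitrary clairvoyant minimizer'' is a plausible alternative route, but as written it is a conjecture: without something playing the role of the potential function $F$ and Facts 1--3, claim 3 is not established.
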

\begin{proof} See Appendix \ref{sec:main_thm}. \end{proof}
\begin{remark}
The above results are derived under the assumption that real-time energy price is constant over the time horizon under consideration. It remains to be seen if analogous results can be obtained with time-varying prices.  
\end{remark}
\section{Profit Maximization}\label{sec:profit}
We now return to the expected profit maximization problem formulated in Section \ref{sec:profit_prob}. Using \eqref{eq:profit} from Section \ref{sec:method}, the expected profit maximization problem can be written as
\begin{align}
J^*&= \max_{\mathcal{C},\vec y} J(\mathcal{C},\vec{y}) \notag \\
&= \max_{\mathcal{C},\vec y} \left[  \sum_{i=1}^n \pi(E^i,m^i) - c^{da}\sum_{t=1}^T y_t - V(\mathcal{C},\vec{y})\right],
\end{align}
which by Theorem \ref{thm:main_thm} can be written as
\begin{align}
 \max_{\mathcal{C},\vec y}   \Bigg[&\sum_{i=1}^n \pi(E^i,m^i) - c^{da}\sum_{t=1}^T y_t \notag \\ &- \mathds{E}_{\vec{R}}\left[\max_{1 \leq t \leq T} \sum_{ s = t}^{T} (d_s -P^{\downarrow}_s)\right]^+\Bigg], \label{eq:max_prob}
\end{align}
where $\vec{P} = \vec{y} + \vec{R}$, and $\vec d$ is the demand-duration vector associated with $\mathcal{D}(\mathcal{C})$.

\emph{Market Price Assumption:} Consider a service $(E,m)$ with its unit rate decomposition being
\[ \mathcal{D}(E,m) = \{ (E^1,1), \ldots, (E^m,1)\}. \]
We will assume that the market price for the service $(E,m)$ is equal to the sum of prices for services in its unit rate decomposition. That is,
\begin{equation}
\pi(E,m) = \sum_{j=1}^m \pi(E^j,1) \label{eq:price_assum}
\end{equation}
This assumption can be justified by the following reasoning: Suppose $\pi(E,m) >\sum_{j=1}^m \pi(E^j,1)$. Then, because of Lemma \ref{lemma:unitrate}, a consumer who wants to purchase the service $(E,m)$ can get the same allocation at a cheaper price  by purchasing the collection of services $\mathcal{D}(E,m)$ instead. Thus, effectively, an $(E,m)$ service can be purchased at  the price of $\sum_{j=1}^m \pi(E^j,1)$ in this market.
Conversely, suppose that $\pi(E,m) < \sum_{j=1}^m \pi(E^j,1)$. Because of the argument of Section \ref{sec:method}, $V(\mathcal{C},\vec y) = V(\mathcal{D}(\mathcal{C}), \vec y)$. Therefore, if $\pi(E,m) < \sum_{j=1}^m \pi(E^j,1)$, then it follows that $J(\mathcal{C},\vec y) < J(\mathcal{D}(\mathcal{C}), \vec y)$. Thus, the supplier would only sell unit rate services and a consumer requiring $(E,m)$ service would have to purchase the collection $\mathcal{D}(E,m)$ instead. The effective market price for $(E,m)$ service would once again be  $\sum_{j=1}^m \pi(E^j,1)$\footnote{We are assuming that the bundling of services incurs no additional costs.}.

Because of \eqref{eq:price_assum}, $J(\mathcal{C},\vec y) = J(\mathcal{D}(\mathcal{C}), \vec y)$. Therefore, we will consider only collections of unit rate services for profit maximization. The optimal collection of unit rate services can then be bundled and sold in different combinations without changing the profit. The revenue from selling a collection of unit rate services $\{(E^j,1)|j=1,\ldots,N\}$ is 
\begin{align}
&\sum_{j=1}^N \pi(E^j,1) = \sum_{t=1}^T \pi(t,1)\left(\sum_{j=1}^N \mathds{1}_{\{ E^j = t\}}\right) \notag \\
&= \sum_{t=1}^T \pi(t,1) \left( d_t - d_{t+1}\right), \label{eq:new_rev}
\end{align}
where we used the definition of demand duration vector $\vec d$ (Definition \ref{def:dvector}) with $d_{T+1}:=0$. Using \eqref{eq:new_rev}, the maximization problem of \eqref{eq:max_prob} can be written as
\begin{align}
& \max_{\vec d,\vec y} J(\vec d,\vec y) \notag \\
 &:= \max_{\vec d,\vec y} \Bigg[ \sum_{t=1}^T \pi(t,1) \left( d_t - d_{t+1}\right)  - c^{da}\sum_{t=1}^T y_t \notag \\
 &~~- \mathds{E}_{\vec{R}}\left[\max_{1 \leq t \leq T} \sum_{ s = t}^{T} (d_s -P^{\downarrow}_s)\right]^+ \Bigg], \label{eq:max_prob2}
\end{align}
where $\vec{P} = \vec{y} + \vec{R}$. The optimization problem in \eqref{eq:max_prob2} is an integer programming problem since $\vec {d}, \vec y$ are constrained to be non-negative integers (recall that the services and allocations were assumed to be discrete-valued). Approximate solutions to this integer program can be obtained by first solving the optimization problem without integer constraints and then rounding up/down the non-integer solution. The following result establishes the efficiency and accuracy of this approach.

\begin{theorem}\label{thm:opt}
1. The optimization problem of \eqref{eq:max_prob2} without the integer constraints is a convex optimization problem. We denote its solution by $(\vec{d}^c, \vec{y}^c)$. \\
2. Construct an  integer solution $(\vec d^a, \vec y^a)$ from $(\vec{d}^c, \vec{y}^c)$ as follows: 
\begin{align}
(d^a_t - d^a_{t+1}) &= \floor{d^c_t-d^c_{t+1}} \\
y^a_t &= \ceil{y^c_t} \label{eq:approx}
\end{align}

 Let $\vec{d}^*,\vec{y}^*$ be the maximizing vectors for the integer program of \eqref{eq:max_prob2}. Then,
 \begin{equation} \label{eq:gap}
 J(\vec d^*, \vec y^*) - J(\vec d^a, \vec y^a) \leq c^{da}T + \sum_{t=1}^T \pi(t,1).
 \end{equation}
\end{theorem}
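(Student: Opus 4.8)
The plan is to prove the two assertions in turn: first that the relaxed problem is concave (equivalently, that the real-time cost is convex), and then to bound the rounding gap by comparing $(\vec d^a,\vec y^a)$ against the continuous optimum $(\vec d^c,\vec y^c)$ term by term. For Part 1, I would note that the feasible set (non-negative vectors with $d_t\geq d_{t+1}$, plus the boundedness constraints) is defined by linear inequalities and hence convex, and that the revenue term $\sum_t \pi(t,1)(d_t-d_{t+1})$ and day-ahead term $-c^{da}\sum_t y_t$ in \eqref{eq:max_prob2} are linear. The only nontrivial piece is $\mathds{E}_{\vec R}[L]$. The key observation is that $\sum_{s=t}^{T}p^{\downarrow}_s$ is the sum of the $T-t+1$ \emph{smallest} entries of $\vec p$, which is concave in $\vec p$ (it equals the linear total sum minus the sum of the $t-1$ largest entries, the latter being convex). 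Hence $-\sum_{s=t}^{T}p^{\downarrow}_s$ is convex; since $\vec p=\vec y+\vec R$ is affine in $\vec y$, each inner expression $\sum_{s=t}^{T}(d_s-p^{\downarrow}_s)$ is jointly convex in $(\vec d,\vec y)$ for fixed $\vec r$. Taking the pointwise maximum over $t$ preserves convexity, composing with the non-decreasing convex map $[\cdot]^+$ keeps it convex, and the expectation over $\vec R$ preserves convexity. Thus $-\mathds{E}_{\vec R}[L]$ is concave and $J$ is concave, giving a convex program.

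For Part 2, I would first check that $(\vec d^a,\vec y^a)$ is feasible: it is integer-valued, and $d^a_t-d^a_{t+1}=\floor{d^c_t-d^c_{t+1}}\geq 0$ because $\vec d^c$ is non-increasing, so $\vec d^a$ is a valid demand-duration vector. Consequently $J(\vec d^*,\vec y^*)\geq J(\vec d^a,\vec y^a)$, while since the continuous feasible set contains the integer one we also have $J(\vec d^*,\vec y^*)\leq J(\vec d^c,\vec y^c)$. It therefore suffices to bound $J(\vec d^c,\vec y^c)-J(\vec d^a,\vec y^a)$, which I split into the three terms. The revenue difference is $\sum_t \pi(t,1)\big[(d^c_t-d^c_{t+1})-\floor{d^c_t-d^c_{t+1}}\big]$, a non-negative combination of fractional parts lying in $[0,1)$, hence at most $\sum_t \pi(t,1)$. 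The day-ahead difference is $c^{da}\sum_t(\ceil{y^c_t}-y^c_t)\leq c^{da}T$, again a sum of fractional parts.

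The crux is the real-time term: I must show $\mathds{E}_{\vec R}[L(\vec d^a,\vec y^a,\vec R)]\leq \mathds{E}_{\vec R}[L(\vec d^c,\vec y^c,\vec R)]$, so that this term only helps. Telescoping the floored increments gives $d^a_t=\sum_{s=t}^{T}\floor{d^c_s-d^c_{s+1}}\leq d^c_t$ for every $t$, while $y^a_t=\ceil{y^c_t}\geq y^c_t$ gives $\vec p^a\geq \vec p^c$ componentwise for every realization $\vec r$. Componentwise domination is preserved under non-increasing rearrangement, so $(p^a)^{\downarrow}_s\geq (p^c)^{\downarrow}_s$ for all $s$. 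Hence $\sum_{s=t}^{T}(d^a_s-(p^a)^{\downarrow}_s)\leq \sum_{s=t}^{T}(d^c_s-(p^c)^{\downarrow}_s)$ for each $t$; taking the maximum over $t$ and applying the monotone map $[\cdot]^+$ yields $L(\vec d^a,\vec y^a,\vec r)\leq L(\vec d^c,\vec y^c,\vec r)$ pointwise, and therefore in expectation. Adding the three bounds gives $J(\vec d^c,\vec y^c)-J(\vec d^a,\vec y^a)\leq c^{da}T+\sum_t \pi(t,1)$, and combining with $J(\vec d^*,\vec y^*)\leq J(\vec d^c,\vec y^c)$ gives \eqref{eq:gap}.

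I expect the main obstacle to be this real-time term in Part 2 — specifically, verifying that the chosen rounding produces the order relations $\vec d^a\leq \vec d^c$ and $\vec p^a\geq \vec p^c$ in exactly the directions that make $L$ decrease, together with the fact that non-increasing rearrangement respects componentwise domination. These are precisely what justify flooring the \emph{increments} of $\vec d$ (rather than $\vec d$ itself) and ceiling $\vec y$; a naive rounding of $\vec d$ would not guarantee monotonicity of the sorted supply-versus-demand gap. In Part 1 the only subtlety is recognizing the sum-of-smallest-entries concavity, after which the composition rules do the rest.
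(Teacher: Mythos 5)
Your proposal is correct and follows essentially the same route as the paper's proof: Part 1 via the observation that $\sum_{s=t}^{T}p^{\downarrow}_s$ is the sum of the $T-t+1$ smallest entries (which the paper expresses equivalently as a maximum of affine functions over subsets), and Part 2 via the same three-term decomposition, the monotonicity $\vec d^a\leq \vec d^c$, $\vec y^a\geq \vec y^c$ implying the real-time cost term can only decrease, and the final chaining through $J(\vec d^*,\vec y^*)\leq J(\vec d^c,\vec y^c)$. Your added details (the telescoping argument for $\vec d^a\leq\vec d^c$, rearrangement preserving componentwise domination, and the feasibility check of $(\vec d^a,\vec y^a)$) are correct refinements of steps the paper states without elaboration.
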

\begin{proof}
See Appendix \ref{sec:opt}.
\end{proof}
\begin{remark}
Note that the price of unit rate services and the day-ahead  energy price per unit will be lower if the discretization unit for energy is smaller. Thus, the optimality gap in \eqref{eq:gap} can be reduced by adopting a finer discretization of energy. In \cite{nayyar_tcns}, we consider a situation where the discretization is taken to the limit and power levels can take real-number values.
\end{remark}

\section{Conclusions and Future Work} \label{sec-conclusions}

Flexible loads are expected to play a central role in supporting deep renewable integration.
They enable demand shaping to balance supply variability, and thus offer an effective alternative to 
conventional generation reserves. In this paper, we studied a stylized model of flexible loads that require rate-constrained energy services. Such a service is characterized by the total amount of energy that must be delivered over a specified time period and the maximum rate at which this energy may be delivered. We addressed the problem of finding optimal market decisions and allocation policies for a supplier that  can sell  rate-constrained energy services in a forward market and can use its uncertain renewable generation as well as energy purchased in day-ahead and real-time energy markets to serve its customers.

The theoretical analysis of this work can facilitate further studies required for exploiting load flexibility. For example,  the results of this paper can be used to evaluate the value of flexibility in reducing supplier's cost under different forecast models of renewable generation. Further, since supply adequacy is not assessed at each time instant, forecasting the exact profile of renewable generation may not be necessary.  Instead, the results of this paper suggest that for providing rate-constrained energy services, the key metric to be estimated is the supply-duration vector of the renewable generation. Other directions for future work include investigation of competitive equilibria in the forward market for energy services and comparisons with other market structures such as spot markets.
\appendices
\section{Preliminary Majorization Based Results}

Let $\VEC a = (a_1,a_2,\ldots,a_T)$ and $\VEC b = (b_1,b_2,\ldots,b_T)$ be two non-negative  vectors  arranged in non-increasing order (that is, $a_t \geq a_{t+1},$ and $b_t \geq b_{t+1}$). Then $\vec a \lessthan \vec b$ if
\bromalist
\item $\sum_{s=t}^T a_s \leq \sum_{s=t}^T b_s$, for $t=1,2,\ldots,T$, and 
\item $\sum_{s=1}^T a_s= \sum_{s=1}^T b_s$.
\end{list}
 The first condition above is equivalent to the following condition: Let $\VEC c$ be any rearrangement of $\VEC b$; then, for any  $S  \subset \{1,2,\ldots,T\}$, there exists $S' \subset \{1,2,\ldots,T\}$ of the same cardinality as $S$ such that $\sum_{s \in S'} a_s \leq \sum_{s \in S} c_s$.

\begin{definition}{\rm 
 We define a 1 unit {\em Robin Hood (RH) transfer} on $\VEC a$ as an operation that:
 
 \bromalist
 \item Selects indices $t,s$ such that $a_t > a_s$,
 \item Replaces $a_t$ by $a_t-1$ and $a_s$ by $a_s+1$.
 \item Rearranges the resulting vector in a non-increasing order.
 \end{list}}
\end{definition}

\begin{lemma}\label{lemma:1}
 Let $\VEC{\tilde a}$ be a vector obtained from $\VEC a$ after a 1 unit RH transfer. Then, $\VEC a \lessthan \VEC{\tilde a}$.
\end{lemma}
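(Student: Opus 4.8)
The plan is to verify directly the two defining conditions of the (reversed) majorization relation $\VEC a \lessthan \VEC{\tilde a}$. Throughout I would assume $\VEC a$ is already sorted in non-increasing order, since the relation depends only on the non-increasing rearrangement of a vector; with this convention the two indices $t,s$ chosen by the RH transfer satisfy $t<s$, because $a_t>a_s$ forces the larger entry to sit at the smaller index. Write $\VEC a'$ for the (not-necessarily-sorted) vector with $a'_t=a_t-1$, $a'_s=a_s+1$ and $a'_i=a_i$ otherwise, so that $\VEC{\tilde a}$ is exactly $\VEC a'$ rearranged in non-increasing order.

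Condition (ii), equality of totals, is immediate: the transfer subtracts $1$ from one entry and adds $1$ to another, so $\sum_s \tilde a^\downarrow_s = \sum_s a'_s = \sum_s a_s = \sum_s a^\downarrow_s$. For condition (i), $\sum_{s=t}^T a^\downarrow_s \le \sum_{s=t}^T \tilde a^\downarrow_s$ for all $t$, I would first convert these suffix-sum inequalities into prefix (``top-$k$'') inequalities. Writing $S_k(\VEC v)$ for the sum of the $k$ largest entries of $\VEC v$, the equal-totals property makes condition (i) equivalent to $S_k(\VEC a)\ge S_k(\VEC{\tilde a})$ for every $k$. Since $S_k$ is invariant under reordering, $S_k(\VEC{\tilde a})=S_k(\VEC a')=\max_{|I|=k}\sum_{i\in I}a'_i$, so it suffices to show $\sum_{i\in I}a'_i \le S_k(\VEC a)$ for every index set $I$ with $|I|=k$.

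The core is a short case analysis on membership of $t,s$ in $I$. Using $\sum_{i\in I} a'_i=\sum_{i\in I}a_i-\mathds{1}_{t\in I}+\mathds{1}_{s\in I}$ together with the trivial bound $\sum_{i\in I}a_i\le S_k(\VEC a)$, every case except $s\in I,\ t\notin I$ yields $\sum_{i\in I}a'_i\le S_k(\VEC a)$ at once, since the added $+1$ is either cancelled by a $-1$ or absent. The case $s\in I,\ t\notin I$ is the one obstacle I expect: the algebra alone only gives the bound $S_k(\VEC a)+1$. To close it I would swap $s$ out and $t$ in, forming $I''=(I\setminus\{s\})\cup\{t\}$, again a size-$k$ set; then $\sum_{i\in I}a_i=\sum_{i\in I''}a_i-(a_t-a_s)\le S_k(\VEC a)-1$, where the final inequality uses both $\sum_{i\in I''}a_i\le S_k(\VEC a)$ and $a_t-a_s\ge 1$. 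Hence $\sum_{i\in I}a'_i=\sum_{i\in I}a_i+1\le S_k(\VEC a)$, as needed.

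With $S_k(\VEC a)\ge S_k(\VEC{\tilde a})$ established for all $k$ and the totals equal, condition (i) follows and therefore $\VEC a\lessthan\VEC{\tilde a}$. The subtlety I would flag explicitly is the integrality step $a_t-a_s\ge 1$, which holds because the entries are non-negative integers and $a_t>a_s$: without it (or the equivalent restriction that at most $a_t-a_s$ units are moved) a single unit transfer could overshoot and break the majorization. This is the standard Hardy--Littlewood--P\'olya fact that a Robin Hood transfer produces a majorized vector, and the argument above gives a self-contained derivation in the discrete setting of the paper.
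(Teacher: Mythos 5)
Your proof is correct and is essentially the paper's own argument in dual form: the paper verifies the suffix-sum (bottom-$k$) characterization of $\VEC a \lessthan \VEC{\tilde a}$ by a case analysis on whether the transfer indices $t,s$ lie in the chosen subset, handling the one nontrivial case by swapping $t$ and $s$ and invoking integrality ($a_t - 1 \geq a_s$), exactly as you do with top-$k$ sums. The only difference is cosmetic — you work with prefix (top-$k$) sums and the paper with its existential bottom-$k$ subset characterization — so the combinatorial core (three cases, one swap, one integrality step) coincides.
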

\proof Note that $\VEC{\tilde a}$ is a rearrangement of the vector $\VEC{\hat {a}} =(a_1,a_2,\ldots, a_t-1,\ldots, a_s +1,\ldots, a_T)$. In order to prove the lemma,  it suffices to show that for any set  $S  \subset \{1,2,\ldots,T\}$, there exists $S' \subset \{1,2,\ldots,T\}$ of the same cardinality as $S$ such that $\sum_{r \in S'} a_r \leq \sum_{r \in S} {\hat a}_r$.

For any subset $S$ of $\{1,2,\ldots,T\}$,
\bromalist
\item if $t \in S, s\in S$ (or if $t \notin S, s\notin S$), then $\sum_{r \in S} \hat a_r = \sum_{r \in S} a_r$. In this case, $S' =S$ satisfies our requirement.
\item if $t \notin S, s \in S$, then $\sum_{r \in S} \hat a_r = \sum_{r \in S} a_r +1$. In this case, $S' =S$ satisfies our requirement.
\item if $t \in S, s\notin S$, then $\sum_{r \in S} \hat a_r = a_t-1 + \sum_{r \in S\setminus\{t\}} \hat a_r \geq a_s +  \sum_{r \in S\setminus\{t\}} \hat a_r = a_s +  \sum_{r \in S\setminus\{t\}} a_r $. In this case, $S' =S\setminus \{t\} \cup \{s\}$ satisfies our requirement.
\end{list}
\endproof

\begin{lemma}\label{lemma:two}
Suppose $\VEC a \lessthan \VEC b$ \footnote{Recall that the vectors are arranged in non-increasing order.}. If for any $t$, $1 \leq t \leq T$, the following conditions hold:
\bromalist
\item $a_j=b_j$ for all $j < t$,
\item $a_t-a_T \leq 1$.
\end{list}
Then, (a) $a_t=b_t$, and (b) $\VEC a = \VEC b$.
\end{lemma}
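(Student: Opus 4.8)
The plan is to work directly with the restated majorization conditions (i)--(ii) at the start of this appendix, using throughout that the entries of $\VEC a$ and $\VEC b$ are non-negative \emph{integers} (this is the case everywhere in the paper, and integrality is essential: for real vectors the statement fails, e.g.\ $\VEC a=(1.5,0.5)\lessthan\VEC b=(1.2,0.8)$ has $a_1-a_2=1$ yet $a_1\neq b_1$). First I would record two elementary consequences of the hypotheses. Since $a_j=b_j$ for $j<t$ and the total sums agree by condition (ii), the tail sums agree: $\sum_{s=t}^T a_s=\sum_{s=t}^T b_s$. Applying condition (i) at index $t+1$, namely $\sum_{s=t+1}^T a_s\le\sum_{s=t+1}^T b_s$, and subtracting it from the tail-sum equality at $t$ gives $a_t\ge b_t$.

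For part (a) I would turn the flatness hypothesis $a_t-a_T\le 1$ into a pinch on $a_t-b_t$. Because $\VEC a$ is non-increasing, every entry with index $\ge t$ lies in $[a_T,a_t]\subseteq[a_t-1,a_t]$, so keeping the $s=t$ term exact, $\sum_{s=t}^T a_s\ge a_t+(T-t)(a_t-1)=(T-t+1)a_t-(T-t)$. Because $\VEC b$ is non-increasing, $b_s\le b_t$ for $s\ge t$, so $\sum_{s=t}^T b_s\le(T-t+1)b_t$. Equating the two tail sums yields $(T-t+1)(a_t-b_t)\le T-t$, i.e.\ $0\le a_t-b_t\le(T-t)/(T-t+1)<1$. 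Since $a_t-b_t$ is an integer it must be $0$, proving $a_t=b_t$.

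Part (b) then follows by propagating the hypotheses from $t$ to $t+1$: having shown $a_t=b_t$, we now have $a_j=b_j$ for all $j\le t$, and $a_{t+1}-a_T\le a_t-a_T\le 1$, so the two conditions of the lemma hold verbatim at index $t+1$. Re-applying part (a) gives $a_{t+1}=b_{t+1}$, and iterating up to $s=T$ yields $a_s=b_s$ for every $s$, i.e.\ $\VEC a=\VEC b$.

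The crux is the pinch in part (a): the bound is genuinely tight, and the only slack available is the single inequality $a_t-a_T\le 1$ combined with integrality, which is exactly what collapses $a_t-b_t<1$ to $a_t-b_t=0$. (Conceptually, restricted to indices $t,\dots,T$ the vector $\VEC a$ is the most balanced integer vector with its tail sum, hence the minimum of the majorization order on that tail; $\VEC a\lessthan\VEC b$ forces the $\VEC a$-tail to majorize the $\VEC b$-tail, while the minimum is majorized by every vector of the same sum, so the two tails coincide. I would nonetheless present the inductive version above as the primary argument, since it reuses part (a) with minimal bookkeeping.)
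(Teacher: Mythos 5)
Your proof is correct and follows essentially the same route as the paper's: tail-sum equality from the equal prefixes, the inequality $a_t \geq b_t$ from the majorization condition at $t+1$, and then the same two tail bounds ($a_s \geq a_t-1$ from flatness, $b_s \leq b_t$ from monotonicity) combined with integrality to force $a_t = b_t$, followed by the identical induction for part (b). The only differences are cosmetic --- you phrase the key step as a direct pinch $0 \leq a_t - b_t \leq (T-t)/(T-t+1) < 1$ rather than the paper's proof by contradiction, and you usefully make explicit the integrality assumption (with a real-valued counterexample) that the paper invokes only implicitly via $b_t \leq a_t - 1$.
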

\proof
$\VEC a \lessthan \VEC b$ and $a_j=b_j$ for all $j < t$ imply that 
\begin{align}
&\sum_{s=t+1}^T a_s \leq \sum_{s=t+1}^T b_s,  \notag \\
&\sum_{s=t}^T a_s = \sum_{s=t}^T b_s. \label{eq:lemma2eq}
\end{align}
Therefore, 
\begin{align}
a_t &= \sum_{s=t}^T a_s - \sum_{s=t+1}^T a_s \notag \\
&\geq \sum_{s=t}^T b_s-\sum_{s=t+1}^T b_s =b_t.
\end{align}
If $a_t > b_t$, then \[\sum_{s=t}^T b_s \leq (T-t+1)b_t \leq (T-t+1)(a_t-1).\] On the other hand, since $a_T \geq a_t-1$, \[\sum_{s=t}^T a_s >  a_t -1 + \sum_{s=t+1}^T a_s  \geq (T-t+1)(a_t-1).\]

 This implies that  $\sum_{s=t}^T a_s \neq \sum_{s=t}^T b_s$, which contradicts \eqref{eq:lemma2eq}. Thus, $b_t$ must be equal to $a_t$. Reapplying the first part of the lemma for $t+1$, then gives $a_{t+1}=b_{t+1}$. Proceeding sequentially till $T$ proves the second part of the lemma. 
\endproof

\begin{lemma} \label{lemma:3}
Let $\VEC a \lessthan \VEC b$ and $\VEC a \neq \VEC b$. Then, there exists a 1 unit RH operation on $\VEC a$ that gives a vector $\VEC{\tilde a } \neq \VEC a$ satisfying $\VEC a \lessthan \VEC{\tilde a} \lessthan \VEC b$.
\end{lemma}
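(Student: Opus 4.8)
The plan is to produce the required intermediate vector by a single explicit Robin Hood transfer and then invoke Lemma \ref{lemma:1} for the relation $\VEC a \lessthan \VEC{\tilde a}$, so that only $\VEC{\tilde a}\lessthan \VEC b$ remains to be checked directly. Throughout I would assume, as in this appendix, that $\VEC a$ and $\VEC b$ are written in non-increasing order. Since $\VEC a \lessthan \VEC b$ with equal total sums, the tail condition is equivalent to the leading partial-sum inequalities $\sum_{i=1}^m a_i \geq \sum_{i=1}^m b_i$ for every $m$, with equality at $m=T$; I would work with these throughout.

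First I would locate the transfer. Let $j$ be the smallest index with $a_j \neq b_j$. Because $a_i = b_i$ for all $i<j$, the partial-sum inequality at $m=j$ forces $a_j \geq b_j$, hence $a_j > b_j$. Since $\sum_i a_i = \sum_i b_i$ and $a_j > b_j$, there must be some index beyond $j$ where $\VEC a$ falls strictly below $\VEC b$; let $k$ be the smallest index with $a_k < b_k$, which is necessarily $>j$. By these choices, $a_i \geq b_i$ for all $j \leq i < k$, and moreover $a_j > b_j \geq b_k > a_k$ (using that $\VEC b$ is non-increasing and $j<k$), so in particular $a_j > a_k$. Thus decreasing the $j$-th entry by one and increasing the $k$-th entry by one is a legitimate $1$-unit RH transfer; let $\VEC{\tilde a}$ be the resulting re-sorted vector. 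Clearly $\VEC{\tilde a}\neq \VEC a$, and Lemma \ref{lemma:1} immediately gives $\VEC a \lessthan \VEC{\tilde a}$.

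It then remains to verify $\VEC{\tilde a}\lessthan \VEC b$. Writing $\hat{\VEC a}$ for the perturbed vector before re-sorting, its positional partial sums are $\sum_{i=1}^m \hat a_i = \sum_{i=1}^m a_i - \ind_{\{m\geq j\}} + \ind_{\{m \geq k\}}$. For $m<j$ and for $m\geq k$ these equal $\sum_{i=1}^m a_i$, which already dominates $\sum_{i=1}^m b_i$. The only delicate range is $j \leq m < k$, where the gain at $k$ has not yet compensated the loss at $j$; here I would use integrality together with $a_j > b_j$ and $a_i \geq b_i$ for $j<i<k$ to obtain $\sum_{i=1}^m a_i - \sum_{i=1}^m b_i \geq a_j - b_j \geq 1$, so that $\sum_{i=1}^m \hat a_i \geq \sum_{i=1}^m b_i$ survives the unit decrement. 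Finally, since a positional partial sum never exceeds the sum of the $m$ largest entries, $\sum_{i=1}^m \tilde a_i^{\downarrow} \geq \sum_{i=1}^m \hat a_i \geq \sum_{i=1}^m b_i$ for every $m$, with equality at $m=T$ because the total sum is preserved; this is exactly $\VEC{\tilde a}\lessthan \VEC b$.

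The main obstacle is the middle range $j\leq m<k$: this is where the single decrement could in principle destroy the majorization by $\VEC b$, and the whole argument hinges on the partial-sum surplus of $\VEC a$ over $\VEC b$ being at least one full unit there, which is precisely what integrality of the entries together with the minimality in the choice of $k$ guarantees. The re-sorting step is disposed of cheaply by comparing against positional rather than sorted partial sums, so no separate bookkeeping of the sorted order of $\hat{\VEC a}$ is required.
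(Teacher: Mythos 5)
Your proof is correct, and it takes a genuinely different route in the two places where the paper's proof does real work. Both arguments perform a single 1-unit RH transfer whose donor is the smallest index $j$ (the paper's $t$) with $a_j \neq b_j$, and both obtain $\VEC a \lessthan \VEC{\tilde a}$ from Lemma \ref{lemma:1}; the differences lie in choosing the recipient and in verifying $\VEC{\tilde a} \lessthan \VEC b$. For the recipient, the paper takes the smallest $s > t$ with $a_t - a_s > 1$ and must invoke Lemma \ref{lemma:two} to show such an $s$ exists; you take the smallest $k$ with $a_k < b_k$, whose existence follows from equality of the total sums alone, so your route bypasses Lemma \ref{lemma:two} entirely (in the paper that lemma serves essentially only this step). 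For the verification, the paper tracks how re-sorting rearranges the perturbed vector (first $t-1$ entries unchanged, tail from $s$ unchanged, middle entries staying $\geq a_t - 1$) and bounds partial sums via the integrality fact $b_t \leq a_t - 1$; you sidestep all sorting bookkeeping by comparing positional (unsorted) leading partial sums of $\hat{\VEC a}$ against those of $\VEC b$ --- the only delicate range $j \leq m < k$ being handled by the surplus $\sum_{i=1}^m (a_i - b_i) \geq a_j - b_j \geq 1$, which minimality of $k$ plus integrality guarantee --- and then noting that sorted leading partial sums dominate positional ones. Your version is the more elementary and self-contained of the two. One small point to make explicit: ``clearly $\VEC{\tilde a} \neq \VEC a$'' is not automatic for an arbitrary RH transfer (a transfer between entries differing by exactly $1$ leaves the sorted vector unchanged); here it holds because your chain $a_j > b_j \geq b_k > a_k$ forces $a_j - a_k \geq 2$, so the number of entries equal to $a_j$ strictly decreases --- the same counting observation the paper spells out for its choice of $s$.
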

\proof $\VEC a \lessthan \VEC b$ implies that for all $t$,
\begin{align}
&\sum_{s=1}^t b_s  = \sum_{s=1}^T b_s - \sum_{r=t+1}^T b_r \notag \\
&\leq \sum_{s=1}^T a_s - \sum_{r=t+1}^T a_r = \sum_{s=1}^t a_s \label{eq:28}
\end{align}
Let $t$ be the smallest index  such that $a_t \neq b_t$. Then, $b_t<a_t$ since $\sum_{i=1}^t b_i \leq \sum_{i=1}^t a_i$ and the two vectors have the same first $t-1$ elements. Let $s>t$ be the smallest index such that $a_t-a_s>1$.  Such $s$ must exist, otherwise Lemma \ref{lemma:two} would imply that $\VEC a = \VEC b$. Consider a 1 unit RH transfer from $t$ to $s$. Let $\VEC{\tilde a}$ be the resulting vector. Then, by Lemma \ref{lemma:1}, $\VEC a \lessthan \VEC{\tilde a}$. 

Also, if $k$ is the number of elements of $\VEC a$ with value equal to  $a_t$, then the number of elements of $\VEC{\tilde a}$ with value equal to $a_t$ is $k-1$. Therefore, $\VEC{\tilde a } \neq \VEC a$.
 
Further, it is clear that $\VEC a$ and $\VEC {\tilde{a}}$ have the same first $t-1$ elements (since the RH operation depleted 1 unit from $a_t$ and added it to $a_s<a_t-1$, the non-increasing rearrangement would not change the $t-1$ highest elements.) Similarly, $\VEC a$ and $\VEC{\tilde{a}}$ have the same elements from index $s$ to $T$. Further, from the definition of $s$, $a_j \geq a_t -1$ for $t \leq j <s$. Since,  $\tilde{a}_t,\tilde{a}_{t+1} \ldots, \tilde{a}_{s-1}$, must be a rearrangement of $a_{t}-1,a_{t+1}\ldots, a_{s-1}$, it follows that $\tilde a_j \geq a_t -1$ for $t \leq j <s$.

We now prove that $\VEC{\tilde a} \lessthan \VEC b$. Clearly, 
\[ \sum_{s=1}^T \tilde a_s = \sum_{s=1}^T a_s = \sum_{s=1}^T b_s. \]
Therefore, it suffices to prove that for all $r$, $\sum_{i=1}^r b_i \leq \sum_{i=1}^r \tilde a_i$.
\bromalist
\item If $r < t$ or $r \geq s$, then $\sum_{i=1}^r \tilde a_i = \sum_{i=1}^r a_i \geq \sum_{i=1}^r b_i$, by \eqref{eq:28}.
\item If $t \leq r < s$, then 
\begin{align}
&\sum_{i=1}^r \tilde a_i = \sum_{i=1}^{t-1}  a_i + \sum_{i=t}^{r}  \tilde a_i  \notag\\
& =\sum_{i=1}^{t-1}  b_i +\sum_{i=t}^{r}  \tilde a_i \geq \sum_{i=1}^{t-1}  b_i +(r-t+1)(a_t-1), \label{eq:one}
\end{align}
where the first and second equalities are true because the first $t-1$ elements of $\VEC a, \VEC{\tilde a}$ and $\VEC b$ are the same, and  last inequality follows from the fact that for $t \leq j <s$, $\tilde{a}_i \geq (a_{t}-1)$.
Moreover, 
\begin{align}
&\sum_{i=1}^rb_i = \sum_{i=1}^{t-1}  b_i+  \sum_{i=t}^r b_i \notag \\
&\leq \sum_{i=1}^{t-1}  b_i+ (r-t+1)b_t \leq \sum_{i=1}^{t-1}  b_i+ (r-t+1)(a_t-1),\label{eq:two}
\end{align}
where the last inequality follows from $b_t < a_t$. Equations \eqref{eq:one} and \eqref{eq:two} imply that
$\sum_{i=1}^r b_i \leq \sum_{i=1}^r \tilde a_i$,  and hence $\VEC{\tilde a} \lessthan \VEC b$. 
\end{list}
\endproof

\begin{claim}
 Let  $\VEC a \lessthan \VEC b$ \footnote{Recall that the vectors are arranged in non-increasing order.} with $\VEC a \neq \VEC b$. Then, there exists a finite sequence of 1 unit RH transfers that can be applied on $\VEC a$ to get $\VEC b$.
\end{claim}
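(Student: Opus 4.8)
The plan is to iterate Lemma~\ref{lemma:3}. I would set $\VEC a^{(0)} := \VEC a$ and, whenever $\VEC a^{(k)} \neq \VEC b$, invoke Lemma~\ref{lemma:3}---which applies precisely because the relation $\VEC a^{(k)} \lessthan \VEC b$ is maintained at every stage---to obtain a vector $\VEC a^{(k+1)} \neq \VEC a^{(k)}$ reachable from $\VEC a^{(k)}$ by a single $1$ unit RH transfer and still satisfying $\VEC a^{(k)} \lessthan \VEC a^{(k+1)} \lessthan \VEC b$. The entire content then reduces to showing that this process halts after finitely many steps. Since Lemma~\ref{lemma:3} supplies a further transfer whenever the current vector is $\lessthan \VEC b$ but unequal to $\VEC b$, halting can occur only when $\VEC a^{(k)} = \VEC b$; the composition of the transfers used up to that point is then the desired finite sequence carrying $\VEC a$ to $\VEC b$.

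For termination I would introduce the potential $\phi(\VEC c) := \sum_{t=1}^T c_t^2$, which is invariant under the non-increasing rearrangement step of an RH transfer since it is symmetric in the entries. The transfer produced by Lemma~\ref{lemma:3} moves one unit from an index $t$ to an index $s$ with $a_t - a_s > 1$ (this strict gap is exactly how $s$ is selected inside the proof of Lemma~\ref{lemma:3}), so a direct computation gives $\phi(\VEC a^{(k+1)}) - \phi(\VEC a^{(k)}) = 2(a_s - a_t + 1) \leq -2$. Because the vectors are non-negative integer vectors, as in the discretized model of the paper, $\phi$ takes values in the non-negative integers, and a strictly decreasing sequence of non-negative integers must be finite. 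Hence there is a finite $K$ with $\VEC a^{(K)} = \VEC b$.

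The step I expect to be the main obstacle is keeping Lemma~\ref{lemma:3} applicable at every stage, i.e.\ preserving the invariant $\VEC a^{(k)} \lessthan \VEC b$; this is handled automatically because Lemma~\ref{lemma:3} itself certifies $\VEC a^{(k+1)} \lessthan \VEC b$, so the invariant propagates by induction. I would also flag that integrality of the entries is what lets the unit-step transfers reach $\VEC b$ exactly and what makes $\phi$ integer-valued, so the argument genuinely uses that $\VEC a$ and $\VEC b$ are integer vectors. An alternative potential such as $\sum_{t=1}^{T}\sum_{s=1}^{t}(a^{(k)}_s - b_s)$, the total of the head-partial-sum gaps, is non-negative by $\VEC a^{(k)} \lessthan \VEC b$ and also strictly decreases under each transfer, so it would serve equally well; the sum of squares is simply the cleanest choice for exhibiting a guaranteed per-step decrement.
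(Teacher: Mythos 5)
Your proposal is correct, and its skeleton---iterating Lemma~\ref{lemma:3} and arguing that the process can halt only at $\VEC b$---is exactly the paper's. Where you genuinely differ is the termination argument. The paper shows the iterates $\VEC a^0 \lessthan \VEC a^1 \lessthan \cdots$ are pairwise distinct (if $\VEC a^m = \VEC a^n$ for some $m<n-1$, the chain of majorizations forces $\VEC a^n = \VEC a^{n-1}$, a contradiction) and then invokes finiteness of the set of non-negative integer vectors majorizing $\VEC b$, all of which share the total $\sum_t b_t$. You instead use the Schur-convex potential $\phi(\VEC c)=\sum_t c_t^2$ and show it drops by at least $2$ per transfer. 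Your decrement computation is right, and your sourcing of the gap condition $a_t - a_s > 1$ from inside the proof of Lemma~\ref{lemma:3} is accurate; in fact you could get it from the lemma's statement alone, since a transfer across a gap of exactly $1$ merely swaps the two values and returns the same non-increasing rearrangement, contradicting $\VEC{\tilde a}\neq \VEC a$. The trade-off: your argument needs no anti-cycling step (strict monotonicity of $\phi$ rules out repeats automatically) and yields an explicit bound of $\bigl(\phi(\VEC a)-\phi(\VEC b)\bigr)/2$ on the number of transfers, whereas the paper's counting argument uses only order-theoretic facts but gives no quantitative bound and must separately establish distinctness of the iterates. Your alternative potential, the sum of head-partial-sum gaps, is also valid for the reason you state: it is non-negative under $\VEC a^{(k)} \lessthan \VEC b$, and since $\VEC a^{(k)} \lessthan \VEC a^{(k+1)}$ with equal totals and $\VEC a^{(k)} \neq \VEC a^{(k+1)}$, at least one head partial sum strictly decreases while none increase.
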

\proof The claim is established using Lemmas \ref{lemma:two} and \ref{lemma:3}.
Let $\VEC a^0 = \VEC a$.
\bromalist
\item For $n =1,2,\ldots$, if $\VEC a^{n-1} \neq \VEC b$,  use Lemma \ref{lemma:3} to construct $\VEC a^n \neq \VEC a^{n-1}$ such that $\VEC a^{n-1} \lessthan \VEC a^n \lessthan \VEC b$. Then, $\VEC a^n \neq \VEC a^m$ for any $m<n-1$ 
(otherwise, we would have $\VEC a^m=\VEC a^n \lessthan \VEC a^{n-1} \lessthan \VEC a^n \implies \VEC a^n=\VEC a^{n-1}$).
\item If  $\VEC a^{n-1} = \VEC b$, stop.
\end{list}

Since there are only finitely many non-negative integer valued vectors that majorize $\VEC b$, this procedure must  stop after finite number of steps and it can do so only if $\VEC a^{n-1}=\VEC b$. This proves the claim.                                 
\endproof

\section{Proof of Lemma \ref{lemma:unitrate}} \label{sec:unitrate}
If $u_t = \sum_{j=1}^{m} \nu^j_t$, $t =1,\ldots, T$, where 
\begin{align}
& \sum_t \nu^j_t = k^i+1, ~\nu^j_t \in \{0,1\} ~~\mbox{for $j=1,2,\ldots,\ell$}, \notag \\
 & \sum_t \nu^j_t = k^i, ~\nu^j_t \in \{0,1\} ~~\mbox{for $j=\ell+1,\ldots, m$},
\end{align}
then clearly $u_t \in \{0,1,\ldots, m\}$ and $\sum_{t=1}^T u_t = km+\ell =E$.

To prove the converse,  first assume $\ell>0$. Consider an allocation $u_t \in \{0,1,\ldots, m\}$ satisfying $\sum_{t=1}^T u_t = km+\ell =E$. Then, $u_t > 0$ for at least $k+1$ time slots and $u_t= m$ for at most $k$ time slots. Pick $k+1$ time slots with largest value of $u_t$. Define $\nu^1_t =1$ at the selected slots and $0$ otherwise. Let $B_t = u_t - \nu^1_t$. Then, $B_t \in \{0,1,\ldots,m-1\}$ and $\sum_t B_t = k(m-1) +(\ell-1)$. If $\ell=0$, then $u_t > 0$ for at least $k$ time slots and $u_t = m$ for at most $k$ time slots. Pick $k$ time slots with largest value of $u_t$. Define $\nu^1_t =1$ at the selected slots and $0$ otherwise. Let $B_t = u_t-\nu^1_t$. Then, $B_t \in \{0,1,\ldots,m-1\}$ and $\sum_t B_t = k(m-1) $. Thus, we can always write $u_t$ as
\begin{equation}
u_t = \nu^1_t + B_t,
\end{equation}
where $\nu^1_t \in \{0,1\}$, $\sum_t \nu^1_t = k + \mathds{1}_{\{\ell >0\}}$, $B_t\in \{0,1,\ldots,m-1\}$ and \[\sum_t B_t = k(m-1) + (\ell-1)^+.\]
Now, if $(\ell-1)>0$, then $B_t >0$ for at least $k+1$ slots and $B_t = m-1$ for at most $k$ slots. Pick $k+1$ time slots with largest value of $B_t$. Define $\nu^2_t =1$ at the selected slots and $0$ otherwise. Let $C_t = B_t - \nu^2_t$. Then, $C_t \in \{0,1,\ldots,m-2\}$ and $\sum_t C_t = k(m-2) +(\ell-2)$. If $(\ell-1)^+ =0$, then $B_t > 0$ for at least $k$ time slots and $B_t = m$ for at most $k$ time slots. Pick $k$ time slots with largest value of $B_t$. Define $\nu^2_t =1$ at the selected slots and $0$ otherwise. Let $C_t = B_t - \nu^2_t$. Then, $C_t \in \{0,1,\ldots,m-2\}$ and $\sum_t C_t = k(m-2) $. Thus, we can write $u_t$ as
\begin{equation}
u_t = \nu^1_t + \nu^2_t +  C_t,
\end{equation}
where $\nu^2_t \in \{0,1\}$, $\sum_t \nu^2_t = k + \mathds{1}_{\{\ell >1\}}$,$C_t \in \{0,1,\ldots,m-2\}$ and \[\sum_t C_t = k(m-2) + (\ell-2)^+.\] Continuing sequentially, we can decompose $u_t$ into $\nu^1_t,\ldots, \nu^m_t$. 

\section{Proof of Theorem \ref{thm:two}} \label{sec:two}
Recall that \[ \mathcal{D}(\mathcal{C}) = \{ (E^j,1)| j=1,2,\ldots, N\},\]
\[ d_t := \sum_{j=1}^N \mathds{1}_{\{ E^j \geq t\}}.\]
We require with the following intermediate result.
\begin{lemma}\label{lemma:one}
For a given collection of services $\mathcal{D}(\mathcal{C})$, if $\VEC a$ is an exactly adequate supply, then any  supply $\VEC b$ satisfying $\VEC a \lessthan \VEC b$  is also exactly adequate.
\end{lemma}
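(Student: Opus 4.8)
The plan is to reduce the statement to the effect of a single Robin Hood transfer and then iterate. First I would record that exact adequacy is invariant under permutations of the supply vector: if allocations $\nu^j_t$ witness exact adequacy of $\VEC a$ and $\sigma$ permutes the time slots, then $\nu^j_{\sigma(t)}$ witnesses exact adequacy of the permuted supply, since each duration constraint $\sum_t \nu^j_t = E^j$ and each slot constraint $\sum_j \nu^j_t = a_t$ is merely relabeled. Consequently it suffices to argue with the non-increasing rearrangements $\VEC a^{\downarrow}$ and $\VEC b^{\downarrow}$, which are the objects appearing both in the definition of $\VEC a \lessthan \VEC b$ and in the Robin Hood machinery of the preliminary appendix.

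The key lemma I would isolate is: if $\VEC a$ is exactly adequate and $\VEC{\tilde a}$ is obtained from $\VEC a$ by one $1$-unit RH transfer, then $\VEC{\tilde a}$ is exactly adequate. To prove it, let the transfer act on slots $t,s$ with $a_t > a_s$, producing (before the final rearrangement) the vector $\VEC{\hat a}$ with $\hat a_t = a_t-1$ and $\hat a_s = a_s+1$. Let $S_t = \{\, j : \nu^j_t = 1 \,\}$ and $S_s = \{\, j : \nu^j_s = 1 \,\}$ be the services switched on in each slot. Exact adequacy gives $|S_t| = a_t > a_s = |S_s|$, hence $S_t \setminus S_s \neq \emptyset$; pick any $j^\ast$ in it. Since $j^\ast$ is on at $t$ but off at $s$, I flip it: set $\hat\nu^{j^\ast}_t = 0$ and $\hat\nu^{j^\ast}_s = 1$, leaving every other allocation unchanged. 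This keeps all entries in $\{0,1\}$, preserves $\sum_\tau \hat\nu^{j^\ast}_\tau = E^{j^\ast}$ (one unit removed, one added) and all other durations, and changes the slot counts only at $t$ and $s$, to $a_t-1 = \hat a_t$ and $a_s+1 = \hat a_s$. Thus $\VEC{\hat a}$, and hence its non-increasing rearrangement $\VEC{\tilde a}$, is exactly adequate.

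Finally I would assemble the result. Since $\VEC a \lessthan \VEC b$, the Claim established at the end of the preliminary appendix furnishes a finite sequence of $1$-unit RH transfers carrying $\VEC a^{\downarrow}$ to $\VEC b^{\downarrow}$ (each term obtained from the previous by a single transfer, cf. Lemma \ref{lemma:1}). Beginning from the exactly adequate $\VEC a^{\downarrow}$ and invoking the key lemma along this sequence, exact adequacy propagates step by step to $\VEC b^{\downarrow}$, and therefore, by the permutation-invariance observation, to $\VEC b$.

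The only genuine content lies in the switching step, and even there the crux is just the counting inequality $|S_t| > |S_s|$, which is immediate from exact adequacy; I do not expect a real obstacle. The care required is bookkeeping: keeping the permutation-invariance reduction and the pre- versus post-rearrangement vectors straight, so that the RH transfer of the appendix (stated for non-increasing vectors) is applied to the correct object and the finiteness guaranteed by the Claim is actually used to terminate the induction.
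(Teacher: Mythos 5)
Your proof is correct and follows essentially the same route as the paper: reduce via the appendix Claim to showing a single $1$-unit RH transfer preserves exact adequacy, then flip one service $j^\ast$ that is on at slot $t$ but off at slot $s$ (which exists since $a_t > a_s$ forces $|S_t| > |S_s|$). The paper handles the rearrangement issue with a brief ``without loss of generality,'' whereas you make the permutation-invariance of exact adequacy explicit, but the substance is identical.
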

\proof Without loss of generality, we will assume that $\VEC a$ and $\VEC b$ are arranged in non-increasing order.
Since $\VEC b$ can be obtained from $\VEC a$ by a sequence of 1 unit RH transfers (see Claim 1, Appendix A), we simply need to prove that a 1 unit RH transfer preserves exact adequacy. Consider an exactly adequate supply vector $\VEC a$ and let $\nu^j_t$, $j=1,\ldots,N, t=1,\ldots,T$ be the corresponding allocations. Consider  a 1 unit RH transfer  from time $t$ to $s$ (without rearrangement) that gives a new supply vector $\VEC{\tilde a}$.   Let $i$ be a service for which $\nu^i_t =1$ but $\nu^i_s =0$. Such an $i$ must exist because $\sum_j \nu^j_t = a_t > a_s = \sum_j \nu^j_s$. Under the new supply, the allocations 
\begin{align}
\tilde{\nu^j_r} =\left \{\begin{array}{ll} & \nu^j_r ~~r\neq t,s \mbox{~or~} j\neq i\\
                                      &0 ~~r=t,j=i \\
                                      &1 ~~r=s,j=i \\
                                        \end{array}
                                       \right.
\end{align} 
establish exact adequacy.                      
\endproof

{\bf Proof of Theorem \ref{thm:two} (a):}  \  Observe that  a supply vector $\vec p =\VEC d $ is exactly adequate: the allocation function $\nu^j_t = \mathds{1}_{\{ E^j \geq t\}}$ meets the exact adequacy requirements. Therefore, by Lemma \ref{lemma:one}, any $\VEC p$ satisfying $\VEC p \morethan \VEC d$ is also exactly adequate. 
 
To prove necessity, suppose $\VEC p$ is exactly adequate and $\nu^j_t$, $j=1,\ldots,N, t=1,\ldots,T$ are the corresponding allocations. Consider any set $\mathcal{S} \subset \{1,2,\ldots,T\}$ of cardinality $s$. Then, because $\nu^j_t \in \{0,1\}$ and $\sum_{t=1}^T \nu^j_t = E^j$, it follows that
\begin{equation} \label{eq:necc1}  \sum_{t \in \mathcal{S}} \nu^j_t \leq \min(s,E^j) = \sum_{t=1}^s \mathds{1}_{\{t \leq E^j\}}. \end{equation}
Summing over $j$,
\begin{align}
 &~~\sum_{j=1}^N\sum_{t \in \mathcal{S}} \nu^j_t \leq \sum_{j=1}^N\sum_{t =1}^s \mathds{1}_{\{t \leq E^j\}} \notag \\
 &\implies \sum_{t \in \mathcal{S}}\sum_{j=1}^N \nu^j_t \leq \sum_{t=1}^s \sum_{j=1}^N\mathds{1}_{\{t \leq E^j\}} \notag \\
 &\implies \sum_{t \in \mathcal{S}} p_t \leq \sum_{t=1}^s d_t  \quad \mbox{for all $\mathcal{S}$ with $|\mathcal{S}| =s$}\notag \\
 &\implies   \sum_{t=1}^s p^{\downarrow}_t\leq \sum_{t=1}^s d_t\label{eq:necc2}
\end{align}
For $s=T$, the inequality in \eqref{eq:necc1} becomes an equality resulting in an equality in \eqref{eq:necc2}. 
\endproof

{\bf Proof of Theorem \ref{thm:two} (b):}
To prove necessity, suppose $\VEC p$ is adequate and $\nu^j_t$, $j=1,\ldots,N, t=1,\ldots,T$ are the corresponding allocations. Then, for any set $\mathcal{S} \subset \{1,\ldots,T\}$ of cardinality $s$
\begin{equation}
\sum_{t \in \mathcal{S}} p_t \geq \sum_{t \in \mathcal{S}} \sum_{j=1}^N \nu^j_t\label{eq:app4_eq1}
\end{equation}
 Further, because $\nu^j_t \in \{0,1\}$ and $\sum_{t=1}^T \nu^j_t = E^j$, it follows that
\begin{align}  
 &\sum_{t \in \mathcal{S}} \nu^j_t \geq \max(E^j-s+1,0) = \sum_{t=s}^T \mathds{1}_{\{t \leq E^j\}} \notag  
 \end{align}
Summing over $j$,
\begin{align}
 &~~\sum_{j=1}^N\sum_{t \in \mathcal{S}} \nu^j_t \geq \sum_{j=1}^N\sum_{t=s}^T \mathds{1}_{\{t \leq E^j\}} \notag \\
 &\implies \sum_{t \in \mathcal{S}}\sum_{j=1}^N \nu^j_t \geq \sum_{t=s}^T \sum_{j=1}^N\mathds{1}_{\{t \leq E^j\}} \notag \\
 &\implies \sum_{t \in \mathcal{S}}\sum_{j=1}^N \nu^j_t \geq  \sum_{t=s}^T d_t. \label{eq:app4_eq2}
\end{align}
Combining \eqref{eq:app4_eq1} and \eqref{eq:app4_eq2} proves the necessity conditions.

To prove sufficiency, let $\Delta = \sum_{t=1}^T p_t - \sum_{t=1}^T d_t$. Then, $\Delta \geq 0$. Consider a new demand duration vector $\VEC d^{\Delta}$ defined as $d^{\Delta}_t := d_t + \Delta\mathds{1}_{\{t \leq 1\}}$. This new demand duration vector corresponds to the original set of unit rate services augmented  with $\Delta$ ``fictitious unit rate services'' with $E^j=1$. It is easy to see that $\VEC p \morethan^w \VEC d$ implies $\VEC p \morethan \VEC d^{\Delta}$. Therefore, by part (a) of Theorem \ref{thm:two}, $\VEC p$ is exactly adequate for the augmented set of services which implies that it must be adequate for the original set of services $\mathcal{D}(\mathcal{C})$.
\endproof
\section{Proof of Theorem \ref{thm:allocation}}\label{sec:app_3}
Since $\VEC p$ is adequate, there must exist at least one set of allocations $b^j_t, j=1\ldots,N,t=1,\ldots,T$
\[ \sum_{t=1}^T b^j_t = E^j, \quad  \sum_{j=1}^N b^j_t \leq p_t.\]
Let $\mathcal{B}_1$ be the subset  of services in $\mathcal{D}(\mathcal{C})$ that are served at time $1$, that is, for which $b^j_1 =1$. Similarly, let $\mathcal{V}_1$ be the subset of services in  $\mathcal{D}(\mathcal{C})$ for which $\nu^j_1$ defined in \eqref{eq:allocation1} is $1$. It is easy to check that $|\mathcal{B}_1| \leq |\mathcal{V}_1|$.

 If $\mathcal{V}_1 \neq \mathcal{B}_1$,  but the two sets have the same cardinality, pick a service $i \in \mathcal{V}_1\setminus\mathcal{B}_1$ and $j \in \mathcal{B}_1\setminus\mathcal{V}_1$. Then, $\pi(i)<\pi(j)$ which implies that $E^i\geq E^j$. Therefore, there must exist a time $s>1$ such that $b^i_s =1$ but $b^j_s=0$. Consider a new set of allocations  obtained by swapping the load $i$  at time $s$ and the load $j$ at time $1$, that is, the new allocation rule $b^k_t(1), k=1,\ldots,N, t=1,\ldots,T$ is identical to $b^k_t$ except that for $t=1,s$
\[ b^i_t(1) = b^j_t \quad \mbox{and} \quad b^j_t(1)=b^i_t. \]
It is straightforward to establish that the new allocation rule satisfies the adequacy requirements. One can proceed  with this swapping argument one service at a time until the set of services being served at time $1$ is equal to $\mathcal{V}_1$.

In the case where $|\mathcal{B}_1| < |\mathcal{V}_1|$, pick $i \in \mathcal{V}_1\setminus\mathcal{B}_1$. There must be a time $s>1$ such that $b^i_s =1$. Consider a new allocation that is identical to $b^k_t$ except that $b^i_1=1$ and $b^i_s=0$. The new set of services served at time $1$ is now $\mathcal{B} \cup \{i\}$. Keep repeating this argument, until $|\mathcal{V}_1|$ number of services are being served at time $1$. Then, use the swapping argument from above to transform the set of services being served at time $1$ to $\mathcal{V}_1$.

For time $t>1$, the same swapping argument works by replacing $E^j$  with the energy needed by service $j$ from time $t$ to the final time, which is $E^j - \sum_{s=1}^{t-1} \nu^j_s$.
\section{Proof of Lemma \ref{lemma:LB}} \label{sec:LB}
 
Let $\mathcal{S}$ be any non-empty subset of  $\{1,2,\ldots,T\}$ and let $|\mathcal{S}|$ denote the cardinality of $\mathcal{S}$.  
\begin{align}
&\sum_{t=1}^T (q_t-p_t) \geq \sum_{t \in \mathcal{S}} (q_t-p_t) \notag \\
& \geq \sum_{s=T-|\mathcal{S}|+1}^T d_s - \sum_{t \in \mathcal{S}} p_t, \label{eq:lb2}
\end{align}
where we used the fact that $\vec{d} \lessthan^w \vec q$ in \eqref{eq:lb2}. Since \eqref{eq:lb2} holds for any $\mathcal{S}$, it follows that  
\begin{align}
&\sum_{t=1}^T (q_t-p_t) \geq \max_{\mathcal{S} \subset \{1,\ldots,T\}} \left( \sum_{s=T-|\mathcal{S}|+1}^T d_s - \sum_{t \in \mathcal{S}} p_t\right) \notag \\
& = \max_{1 \leq r \leq T} \left[ \max_{\mathcal{S} \subset \{1,\ldots,T\}, |\mathcal{S}|=T-r+1} \left( \sum_{s=T-|\mathcal{S}|+1}^T d_s - \sum_{t \in \mathcal{S}} p_t\right)\right]  \notag \\
& = \max_{1 \leq r \leq T} \left[ \sum_{s=r}^T d_s -  \min_{\mathcal{S} \subset \{1,\ldots,T\}, |\mathcal{S}|=T-r+1} \sum_{t \in \mathcal{S}} p_t \right] \notag \\
& = \max_{1 \leq r \leq T} \left[ \sum_{s=r}^T d_s -  \sum_{s=r}^T p^{\downarrow}_s \right]. \label{eq:lb1}
\end{align}
 \eqref{eq:lb1} and the fact that $\sum_{t=1}^T(q_t-p_t) \geq 0$ proves the lemma.
\section{Proof of Theorem \ref{thm:main_thm}}\label{sec:main_thm}
Recall that $q_t = y_t +r_t + g^*_t(\mathcal{C}, \vec y, r_1,\ldots,r_t)$, where $g^*(\cdot)$ is defined as the solution of the minimization problem in \eqref{eq:rt_purch}.

{\bf Proof of Part 1.} From the definition of $a_1$, we have  $q_1 \geq d_T$ which implies the following majorization relation between the one-dimensional vectors $q_1$ and $d_T$: $q_1 \morethan^w d_T$. 

We now proceed by a induction argument. Assume that the $t-1$ dimensional majorization relation $(q_1,\ldots,q_{t-1}) \morethan^w (d_{T-t+2},\ldots,d_T)$ is true. Then,  the  optimization problem of \eqref{eq:rt_purch} at time $t$ is equivalent to the following linear program
\begin{align}
\min_{a_t \geq 0} a_t  \notag
\end{align}
 subject to
\begin{eqnarray*}
& & y_t+r_t+a_t \geq d_T \\
& &q_s+y_t+r_t+ a_t \geq d_T + d_{T-1}, \quad \mbox{for all~}s <t \\
& & \sum_{s=s_1,s_2,\ldots,s_k}q_s + (y_t+r_t+a_t) \geq d_{T-k}+\cdots+ d_T, \\ 
&&\mbox{for all~} s_1 < \cdots < s_k <t  \mbox{ and~} k \leq t-1
\end{eqnarray*}
It is straightforward to see that the above linear program has a non-negative integer solution. (Recall that all variables in the linear program are non-negative integers.)
The resulting $t-$ dimensional vector $(q_1,\ldots,q_t)$ (where $q_t =y_t+r_t+a_t$) satisfies the $t-$ dimensional majorization relation $(q_1,\ldots,q_{t}) \morethan^w (d_{T-t+1},\ldots,d_T)$. The induction argument is now complete and we conclude the $T-$ dimensional majorization relation $\vec q \morethan^w \vec d$ is true.

{\bf Proof of Part 2.} This is an immediate consequence of Theorem \ref{thm:allocation} and Lemma \ref{lemma:unitrate2}.

{\bf Proof of Part 3.} We first define $T+1$ vectors $\vec p(0), \vec p(1), \ldots, \vec p(T)$ as follows:
\begin{align}
&\vec p(0) := \vec y + \vec r, \notag \\
& \vec p(1) := \vec y + \vec r + (a_1,0,\ldots,0) \notag \\
&\vec p(t) := \vec y + \vec r + (a_1,a_2,\ldots, a_t, 0 \ldots 0)  ~~ \mbox{ for $1< t <T$},\notag \\
&\vec p(T) = \vec y + \vec r+ (a_1,\ldots, a_T) = \vec q,
\end{align}
where $a_1,a_2,\ldots,a_T$ are the real-time purchases under the policy $g^*$ obtained by the solving the minimization in \eqref{eq:rt_purch}.

We also define the following functions:

\begin{enumerate}
\item  For any vector $\VEC x = (x_1,\ldots,x_T)$, 
\[F(\VEC x) := \left[\max_{\mathcal{S} \subset \{1,2,\ldots,T\}} \left(\sum_{ s= T- |\mathcal{S}|+1}^T d_s - \sum_{r \in \mathcal{S}} x_r\right)\right]^+.\]

\item $\mathcal{S}(\vec x)$ be any subset of $\{1,2,\ldots,T\}$ that achieves the maximum in the definition of $F(\vec x)$.
\end{enumerate}

We will use the following result.
\begin{claim} \label{claim:2}
1. $F(\vec p(T)) =0$ and \[F(\vec p(0)) = L(\mathcal{C}, \vec y, \vec r) = \left[\max_{1 \leq r \leq T} \sum_{ s = r}^{T} (d_s -p^{\downarrow}_s)\right]^+,\] where $\vec p =\vec y+\vec r$. \\
2. For $t=2,\ldots,T$, 
\begin{equation} 
 F(\vec p(t)) = F(\vec p(t-1)) -a_t. \label{eq:F_eq}
\end{equation}
\end{claim}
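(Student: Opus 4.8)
The plan is to rewrite the set function $F$ in a sorted form that matches the definition of $L$, dispatch Part 1 directly, and then obtain Part 2 from two opposing inequalities — the second of which, resting on the minimality built into $g^*$, is the real content.

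First I would record the identity
\[ F(\vec x) = \Big[\max_{1\le r\le T}\sum_{s=r}^{T}\big(d_s - x^{\downarrow}_s\big)\Big]^{+}. \]
For a fixed cardinality $|\mathcal S|=k$ the inner objective $\sum_{s=T-k+1}^T d_s - \sum_{r\in\mathcal S}x_r$ is maximized by letting $\mathcal S$ be the $k$ smallest entries of $\vec x$, whose sum is $\sum_{s=T-k+1}^T x^{\downarrow}_s$; since $\vec d=\vec d^{\downarrow}$, the term $\sum_{s=T-k+1}^T d_s$ is the sum of the $k$ smallest $d_s$. Reindexing $r=T-k+1$ gives the display. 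Part 1 is then immediate: at $\vec p(0)=\vec y+\vec r=\vec p$ the formula is exactly the definition \eqref{eq:LBfunc} of $L(\mathcal C,\vec y,\vec r)$, while at $\vec p(T)=\vec q$ Part 1 of Theorem \ref{thm:main_thm} gives $\vec d \lessthan^w \vec q$, i.e. $\sum_{s=r}^T d_s \le \sum_{s=r}^T q^{\downarrow}_s$ for all $r$, so every bracketed term is $\le 0$ and $F(\vec q)=0$.

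For Part 2, write $\phi_k(\vec x):=\sum_{s=T-k+1}^T d_s-\sigma_k(\vec x)$ where $\sigma_k(\vec x)$ is the sum of the $k$ smallest entries, so $F(\vec x)=\max_{0\le k\le T}\phi_k(\vec x)$ with $\phi_0\equiv 0$. The easy inequality $F(\vec p(t))\ge F(\vec p(t-1))-a_t$ comes from monotonicity: $\vec p(t)$ differs from $\vec p(t-1)$ only in coordinate $t$, where it is larger by $a_t$, and $\sigma_k$ is nondecreasing in each coordinate with $0\le\sigma_k(\vec p(t))-\sigma_k(\vec p(t-1))\le a_t$; hence $\phi_k(\vec p(t))\ge\phi_k(\vec p(t-1))-a_t$ for all $k$, and taking the maximum together with $[x-a_t]^+\ge[x]^+-a_t$ finishes this direction. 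The reverse inequality $F(\vec p(t-1))\ge F(\vec p(t))+a_t$ is the crux. If $a_t=0$ it is trivial, so assume $a_t>0$. For any subset $\mathcal S\ni t$ one has $\psi(\mathcal S,\vec p(t-1))=\psi(\mathcal S,\vec p(t))+a_t$, writing $\psi(\mathcal S,\vec x)=\sum_{s=T-|\mathcal S|+1}^T d_s-\sum_{r\in\mathcal S}x_r$, so it suffices to exhibit a \emph{maximizing} subset of $F(\vec p(t))$ that contains $t$. To build one I would invoke the minimality of $a_t$ in \eqref{eq:rt_purch}: because $a_t>0$, the $t$-dimensional relation $(q_1,\dots,q_t)\morethan_w(d_{T-t+1},\dots,d_T)$ is tight through a constraint that genuinely involves the freshly purchased coordinate, i.e. there is a cardinality $k^*$ whose $k^*$ smallest of $q_1,\dots,q_t$ include $q_t=y_t+r_t+a_t$ and sum to $\sum_{s=T-k^*+1}^T d_s$ (were $q_t$ not among them, that constraint would be insensitive to $a_t$ and $a_t$ could be reduced). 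I would then enlarge this binding set by the tail coordinates $p_{t+1},\dots,p_T$ that are no larger than $q_t$, producing a set $\mathcal S^{\dagger}\ni t$ that is a legitimate ``$k^{\dagger}$ smallest'' set of the full vector $\vec p(t)$.

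The main obstacle is the final verification that $\mathcal S^{\dagger}$ is \emph{globally} optimal for $F(\vec p(t))$, not merely that $\phi_{k^{\dagger}}(\vec p(t))\ge 0$. Here the induction hypothesis $(q_1,\dots,q_{t-1})\morethan_w(d_{T-t+2},\dots,d_T)$ (the $t-1$ step of Theorem \ref{thm:main_thm}, Part 1) must be combined with the tightness at $k^*$ to rule out that some other cardinality yields a strictly larger $\phi_k(\vec p(t))$ attainable only by a set excluding $t$; the delicate point is handling ties at the threshold value $q_t$ so that $t$ can always be kept in an optimal smallest-$k^{\dagger}$ set. Once a maximizing set containing $t$ is secured, the two inequalities give $F(\vec p(t))=F(\vec p(t-1))-a_t$, which together with Part 1 is exactly what telescopes to $\sum_t a_t=L(\mathcal C,\vec y,\vec r)$ in the proof of Theorem \ref{thm:main_thm}, Part 3.
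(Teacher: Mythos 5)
Your Part 1 is correct and follows the paper's own route: the sorted rewriting of $F$ is exactly how the paper identifies $F(\vec p(0))$ with $L(\mathcal{C},\vec y,\vec r)$, and $F(\vec p(T))=0$ follows from $\vec d \lessthan^w \vec q$ of Theorem \ref{thm:main_thm}, Part 1. The two-inequality strategy for Part 2 is also sound in outline, and your easy direction (coordinate-wise monotonicity of the sum of the $k$ smallest entries) is fine.

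But the hard direction contains a genuine gap, and you have named it yourself: everything hinges on exhibiting a \emph{globally} maximizing set for $F(\vec p(t))$ that contains $t$, and your proposal only sketches a candidate $\mathcal{S}^{\dagger}$ (a binding ``$k^*$ smallest'' set of $(q_1,\ldots,q_t)$, enlarged by tail coordinates no larger than $q_t$) and then states that its global optimality ``must be combined'' with the induction hypothesis and tie-handling. That verification is not a routine detail; it is the entire content of the claim, and the construction you propose does not obviously deliver it: a union of a binding smallest-$k^*$ set with small tail coordinates need not be optimal over all cardinalities, and the tie issue at the threshold value $q_t$ that you flag is a real defect of the sorted formulation. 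The paper resolves precisely this point with a set-exchange argument that avoids sorting altogether (Facts 1--3 in Appendix \ref{sec:main_thm}). Writing $\psi(\mathcal{T},\vec x) := \sum_{s=T-|\mathcal{T}|+1}^{T} d_s - \sum_{s\in\mathcal{T}} x_s$, one takes $\mathcal{T}^*$ to be a maximizer of $\psi(\cdot,\vec p(t-1))$ over nonempty subsets of $\{1,\ldots,t\}$: the earlier purchases $a_1,\ldots,a_{t-1}$ force $\psi \le 0$ for every subset of $\{1,\ldots,t-1\}$, while $a_t>0$ forces some subset to give $\psi>0$, so necessarily $t\in\mathcal{T}^*$; and minimality of $a_t$ gives $\psi(\mathcal{T}^*,\vec p(t))=0 \ge \psi(\mathcal{T},\vec p(t))$ for all $\mathcal{T}\subset\{1,\ldots,t\}$. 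Then, for an arbitrary $\mathcal{S}\subset\{1,\ldots,T\}$, one compares $\mathcal{S}$ with $\mathcal{T}^*\cup(\mathcal{S}\setminus\mathcal{T}^*)$ by splitting $\psi$ into a block over $\mathcal{S}\cap\mathcal{T}^*$ (dominated by the block over $\mathcal{T}^*$, by optimality of $\mathcal{T}^*$) and a block over $\mathcal{S}\setminus\mathcal{T}^*$ (which can only improve, because $d_s\ge d_{s+1}$ and the $d$-indices shift downward). This shows that both $F(\vec p(t-1))$ and $F(\vec p(t))$ admit maximizers of the form $\mathcal{T}^*\cup\mathcal{V}$; since $\vec p(t-1)$ and $\vec p(t)$ agree off the coordinate $t\in\mathcal{T}^*$, the same $\mathcal{V}^*$ serves for both, and the identity $F(\vec p(t)) = F(\vec p(t-1)) - a_t$ drops out in one step. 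If you replace your ``obstacle'' paragraph with this exchange lemma (or an equivalent completed argument), your proof closes; as written, it does not.
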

 
 Suppose Claim \ref{claim:2} is true. Then, combining \eqref{eq:F_eq} for all $t$ gives
\begin{align}
&F(\VEC p(T)) = F(\VEC p) - \sum_{t=1}^T a_t \notag \\
&\implies \quad  \sum_{t=1}^T a_t = F(\VEC p) - F(\VEC p(T)) =L(\mathcal{C}, \vec y, \vec r) , \label{eq:costbound3}
\end{align}
and this proves part 3 of the theorem. 

We now prove Claim \ref{claim:2}. By part 1 of Theorem \ref{thm:main_thm}, $\vec p(T) = \vec q \morethan^w \vec d$. Therefore, for any set $\mathcal{S}$,
\[ \sum_{ s= T- |\mathcal{S}|+1}^T d_s \leq \sum_{r \in \mathcal{S}} q_r. \]
This implies that $F(\vec p(T)) =0$.

For $\vec p(0)$,
\begin{align}
&F(\VEC p(0)) =\left[\max_{\mathcal{S} \subset \{1,2,\ldots,T\}} \left(\sum_{ s= T- |\mathcal{S}|+1}^T d_s - \sum_{t \in \mathcal{S}} p_t(0)\right)\right]^+ \notag \\
&=  \left[\max_{1 \leq r \leq T}  \max_{\substack{\mathcal{S} \subset \{1,\ldots,T\},\\ |\mathcal{S}|=T-r+1}} \left( \sum_{s=T-|\mathcal{S}|+1}^T d_s - \sum_{t \in \mathcal{S}} p_t(0)\right)\right]^+  \notag \\
& = \left[\max_{1 \leq r \leq T}  \sum_{s=r}^T d_s -  \min_{\substack{\mathcal{S} \subset \{1,\ldots,T\}, \\ |\mathcal{S}|=T-r+1}} \sum_{t \in \mathcal{S}} p_t(0) \right]^+ \notag \\
& =  \left[\max_{1 \leq r \leq T} \sum_{s=r}^T d_s -  \sum_{s=r}^T p^{\downarrow}_s(0) \right]^+ \notag \\
&= L(\mathcal{C}, \vec y, \vec r). \notag
\end{align}
This proves the first part of the claim.

For the second part, if $a_t =0$, then $\vec p(t) = \vec p(t-1)$ and therefore \eqref{eq:F_eq} trivially holds. We consider the case where $a_t > 0$. Recall from the proof of Part 1 of Theorem \ref{thm:main_thm} that $a_t$ is the solution of the linear program
\begin{align}
\min_{a_t \geq 0} a_t  \notag
\end{align}
 subject to
\begin{eqnarray*}
& & y_t+r_t+a_t \geq d_T \\
& &q_s+y_t+r_t+a_t \geq d_T + d_{T-1}, \quad \mbox{for all~}s <t \\
& & \sum_{s=s_1,s_2,\ldots,s_k}q_s + (y_t+r_t+a_t) \geq d_{T-k}+\cdots+ d_T, \\ 
&&\mbox{for all~} s_1 < \cdots < s_k <t  \mbox{ and~} k \leq t-1
\end{eqnarray*}
(Also recall that for $s <t$, $q_s = y_s+r_s + a_s = p_s(t-1)$.)
The constraints in the above program can be written as:
\[\sum_{s=T-|\mathcal{T}|+1}^T d_s - \sum_{s \in \mathcal{T}} p_s(t-1)-a_t  \leq 0 ,\]
for all $\mathcal{T} \subset \{1,\ldots,t\}$ that contain $t$.

Let $\mathds{T}$ be the collection of all non-empty subsets of $\{1,2,\ldots,t\}$. For any $\mathcal{T} \in \mathds{T}$ that does not contain $t$ (that is, $\mathcal{T} \subset \{1,\ldots,t-1\}$), the real-time purchase decisions $a_1,a_2,\ldots,a_{t-1}$ chosen according to \eqref{eq:rt_purch} ensure that 
 \[\sum_{s=T-|\mathcal{T}|+1}^T d_s - \sum_{s \in \mathcal{T}} p_s(t-1)  \leq 0.\]
However, $a_t > 0$ implies that there must be some set $\mathcal{T} \subset \{1,\ldots,t\}$ with $t \in \mathcal{T}$ for which
\begin{equation} \label{eq:F_pf}
  \sum_{s=T-|\mathcal{T}|+1}^T d_s - \sum_{s \in \mathcal{T}} p_s(t-1)  >0.
\end{equation}
If no such $\mathcal{T}$ existed, then $a_t$ should have been $0$.
Consider the following maximization 
\begin{equation}
\max_{\mathcal{T} \in \mathds{T}} \sum_{s=T-|\mathcal{T}|+1}^T d_s - \sum_{s \in \mathcal{T}} p_s(t-1). \label{eq:maxT}
\end{equation}
and let $\mathcal{T}^* $ be a maximizing set.
 From \eqref{eq:F_pf}, it follows that the maximum value in \eqref{eq:maxT} is positive and that $t \in \mathcal{T}^*$. 
 
 We now proceed as follows:
 
 {\bf Fact 1: $F(\vec p(t-1))$ is maximized by some $\mathcal{S}(\vec p(t-1)) \supset \mathcal{T}^*$. }
 
To prove this fact, consider any $\mathcal{S}$ and write it as  $ \left(\mathcal{S} \cap \mathcal{T}^*\right)  \cup \mathcal{S}\setminus \mathcal{T}^*$. Then,
\begin{align}
&\sum_{ s= T- |\mathcal{S}|+1}^T d_s - \sum_{s \in \mathcal{S}} p_s(t-1) \notag \\
&=\left[\sum_{ s= T- |\mathcal{S}\cap \mathcal{T}^*|+1}^T d_s - \sum_{s \in \mathcal{S}\cap \mathcal{T}^*} p_s(t-1) \right] \notag \\
&+\left[\sum_{ s= T- |\mathcal{S}|+1}^{ T- |\mathcal{S}\cap \mathcal{T}^*|} d_s - \sum_{s \in \mathcal{S}\setminus \mathcal{T}^*} p_s(t-1) \right]. \label{eq:fact1}
\end{align}
Now consider a $\mathcal{S}' := \mathcal{T}^* \cup \mathcal{S}\setminus \mathcal{T}^* $. For this set,
\begin{align}
&\sum_{ s= T- |\mathcal{S}'|+1}^T d_s - \sum_{s \in \mathcal{S}'} p_s(t-1) \notag \\
&=\left[\sum_{ s= T- | \mathcal{T}^*|+1}^T d_s - \sum_{s \in  \mathcal{T}^*} p_s(t-1) \right] \notag \\
&+\left[\sum_{ s= T- |\mathcal{S}'|+1}^{ T- | \mathcal{T}^*|} d_s - \sum_{s \in \mathcal{S}\setminus \mathcal{T}^*} p_s(t-1) \right]. \label{eq:fact2}
\end{align}
Compare each of the two bracketed terms in \eqref{eq:fact1} with the corresponding term in \eqref{eq:fact2}. The first term in \eqref{eq:fact1} is no greater  than the corresponding term in \eqref{eq:fact2} because $\mathcal{T}^*$ achieves the maximum in \eqref{eq:maxT}.  The second term in \eqref{eq:fact1} is no greater than the corresponding term in \eqref{eq:fact2} because $\vec d $ is non-increasing in its index ($d_t \geq d_{t+1}$). Thus, $F(\vec p(t-1))$ is maximized by some $\mathcal{S}(\vec p(t-1)) \supset \mathcal{T}^*$.

{\bf Fact 2: $F(\vec p(t))$ is maximized by some $\mathcal{S}(\vec p(t)) \supset \mathcal{T}^*$. }
Note that the $\vec p(t-1)$ and $\vec p(t)$ are the same except for the $t^{th}$ element.
The real-time purchase at time $t$,  $a_t$ the smallest value added to $\vec p(t-1)$ that ensures that 
\[ (d_{T-t+1},\ldots, d_T ) \lessthan^w (p_1(t),p_2(t),\ldots, p_t(t)).\] Therefore, for any $\mathcal{T} \subset \{1,\ldots,t\}$, 
\[\sum_{s=T-|\mathcal{T}|+1}^T d_s - \sum_{s \in \mathcal{T}} p_s(t)  \leq 0,\]
with equality for $\mathcal{T} = \mathcal{T}^*$.
We can now repeat the arguments used in proving Fact 1 to establish Fact 2.

{\bf Fact 3: Both $F(\vec p(t-1))$ and $F(\vec p(t))$) are maximized by a set $\mathcal{U} = \mathcal{T}^* \cup \mathcal{V}^*$, where $\mathcal{T}^* \cap \mathcal{V}^* = \emptyset$.}

Because of Fact 1,  $F(\vec p(t-1))$ can be written as
\begin{align}
&\left[\sum_{ s= T- | \mathcal{T}^*|+1}^T d_s - \sum_{s \in  \mathcal{T}^*} p_s(t-1) \right]  + \notag \\ &\max_{\mathcal{V} \subset \{1,\ldots,T\} \setminus \mathcal{T}^*} 
\left[ \sum_{ s= T- |\mathcal{V}| - | \mathcal{T}^*|+1}^{ T- | \mathcal{T}^*|} d_s - \sum_{s \in \mathcal{V}} p_s(t-1) \right] \label{eq:fact3a}
\end{align}
Similarly, for $F(\vec p(t))$,
\begin{align}
&\left[\sum_{ s= T- | \mathcal{T}^*|+1}^T d_s - \sum_{s \in  \mathcal{T}^*} p_s(t) \right]  + \notag \\ &\max_{\mathcal{V} \subset \{1,\ldots,T\} \setminus \mathcal{T}^*} 
\left[ \sum_{ s= T- |\mathcal{V}| - | \mathcal{T}^*|+1}^{ T- | \mathcal{T}^*|} d_s - \sum_{s \in \mathcal{V}} p_s(t) \right]\label{eq:fact3b}
\end{align}
Recall that $\vec p(t-1)$ and $\vec p(t)$ differ only in the $t^{th}$ element and that $t \in \mathcal{T}^*$. Therefore, the maximizations in \eqref{eq:fact3a} and \eqref{eq:fact3b} are the same. This establishes Fact 3.

Because of Fact 3,
\begin{align}
F(\vec p(t)) &= \left[\sum_{ s= T- | \mathcal{T}^*|+1}^T d_s - \sum_{s \in  \mathcal{T}^*} p_s(t) \right]   \notag \\ 
& +\left[ \sum_{ s= T- |\mathcal{V}^*| - | \mathcal{T}^*|+1}^{ T- | \mathcal{T}^*|} d_s - \sum_{s \in \mathcal{V}^*} p_s(t) \right] \notag \\
&= \left[\sum_{ s= T- | \mathcal{T}^*|+1}^T d_s - \sum_{s \in  \mathcal{T}^*} p_s(t-1) \right] - a_t   \notag \\ 
& +\left[ \sum_{ s= T- |\mathcal{V}^*| - | \mathcal{T}^*|+1}^{ T- | \mathcal{T}^*|} d_s - \sum_{s \in \mathcal{V}^*} p_s(t-1) \right] \notag \\
&= F(\vec p(t-1)) -a_t.
\end{align}
This proves the second part of Claim \ref{claim:2}.

\section{Proof of Theorem \ref{thm:opt}} \label{sec:opt}
{\bf Part 1.} In order to prove the convexity of the objective function in \eqref{eq:max_prob2} , it suffices to prove the convexity of 
\[ \max_{1 \leq t \leq T} \left(\sum_{ s= t}^T (d_s -(\VEC r +\VEC y)^{\downarrow}_s)\right)^+\]
for each possible realization $\VEC r$ of the renewable supply. Since maximum of convex functions is convex, it is sufficient to prove that 
\begin{equation} \label{eq:convexity1} \left(\sum_{ s = t}^T (d_s -(\VEC r +\VEC y)^{\downarrow}_s)\right)^+
\end{equation}
is convex for all $t$. \eqref{eq:convexity1} can be written as
\begin{align}
&= \sum_{ s =t}^{T} d_s - \min_{\substack{\mathcal{S} \subset \{1,2,\ldots,T\},\\ |\mathcal{S}| = T-t+1}} \left( \sum_{k \in \mathcal{S}} (r_k +y_k)\right) \notag \\
&=\sum_{ s =t}^{T} d_s + \max_{\substack{\mathcal{S} \subset \{1,2,\ldots,T\},\\ |\mathcal{S}| = T-t+1 }} \left( -\sum_{k \in \mathcal{S}} (r_k +y_k)\right) \notag \\
&= \max_{\substack{\mathcal{S} \subset \{1,2,\ldots,T\},\\ |\mathcal{S}| = T-t+1 }}\left(\sum_{ s =t}^{T} d_s  -\sum_{k \in \mathcal{S}} (r_k +y_k)\right) \label{eq:convexity2}
\end{align}
Since \eqref{eq:convexity2} is a maximum of affine functions of $ \vec d, \VEC y$, it implies that it is convex in $\vec d, \VEC y$. This completes the proof pf Part 1.

{\bf Part 2.} The construction of integer solutions in \eqref{eq:approx} implies that 
\begin{align}
&\sum_{t=1}^T \pi(t,1)(d^a_t-d^a_{t+1}) \geq\sum_{t=1}^T \pi(t,1)(d^c_t-d^c_{t+1}-1)\notag \\
& = \sum_{t=1}^T \pi(t,1)(d^c_t-d^c_{t+1})  - \sum_{t=1}^T\pi(t,1), \label{eq:bound1}
\end{align}
and 
\begin{equation} \label{eq:bound2} c^{da}\sum_{t=1}^T y^a_t \leq  c^{da}\sum_{t=1}^T (y^c_t+1) = c^{da}\sum_{t=1}^T y^c_t + c^{da}T 
\end{equation}
Thus, the decrease in revenue in changing $(\vec d^c, \vec y^c)$ to $(\vec d^a, \vec y^a)$ is less than $\sum_{t=1}^T \pi(t,1)$ while the increase in day-ahead energy cost is less than $c^{da}T$. 

Because, $\vec d^a \leq \vec d^c$ and $\vec y^a \geq \vec y^c$, it follows that for each realization $\vec r$ of $\VEC R$, 
\begin{align}
&\left[\max_{1 \leq t \leq T} \sum_{s = t}^{T}(d^a_s -(\vec y^a +\vec r)^{\downarrow}_s)\right]^+  \notag \\
& \leq \left[\max_{1 \leq t \leq T} \sum_{ s = t}^{T} (d^c_s -(\vec y^c +\vec r)^{\downarrow}_s)\right]^+. \label{eq:bound3}
\end{align}
Combining \eqref{eq:bound1},\eqref{eq:bound2} and \eqref{eq:bound3}, it follows that 
\[ J(\vec d^c, \vec y^c) - J(\vec d^a, \vec y^a) \leq c^{da}T + \sum_{t=1}^T \pi(t,1). \]
Finally, observing  that 
\[J(\vec d^*, \vec y^*) - J(\vec d^a, \vec y^a)  \leq J(\vec d^c, \vec y^c) - J(\vec d^a, \vec y^a),\]
proves \eqref{eq:gap}.

\bibliographystyle{IEEEtran}
\bibliography{timedif}

\begin{thebibliography}{10}
\providecommand{\url}[1]{#1}
\csname url@samestyle\endcsname
\providecommand{\newblock}{\relax}
\providecommand{\bibinfo}[2]{#2}
\providecommand{\BIBentrySTDinterwordspacing}{\spaceskip=0pt\relax}
\providecommand{\BIBentryALTinterwordstretchfactor}{4}
\providecommand{\BIBentryALTinterwordspacing}{\spaceskip=\fontdimen2\font plus
\BIBentryALTinterwordstretchfactor\fontdimen3\font minus
  \fontdimen4\font\relax}
\providecommand{\BIBforeignlanguage}[2]{{%
\expandafter\ifx\csname l@#1\endcsname\relax
\typeout{** WARNING: IEEEtran.bst: No hyphenation pattern has been}%
\typeout{** loaded for the language `#1'. Using the pattern for}%
\typeout{** the default language instead.}%
\else
\language=\csname l@#1\endcsname
\fi
#2}}
\providecommand{\BIBdecl}{\relax}
\BIBdecl

\bibitem{CAISO}
CAISO, ``Integration of renewable resources: Operational requirements and
  generation fleet capability at 20\% rps,'' \emph{California Independent
  System Operator Report}, 2010.

\bibitem{kirschen2010}
M.~Ortega-Vazquez and D.~Kirschen, ``Assessing the impact of wind power
  generation on operating costs,'' \emph{Smart Grid, IEEE Transactions on},
  vol.~1, no.~3, pp. 295 --301, dec. 2010.

\bibitem{negwankowshamey12}
M.~Negrete-Pincetic, G.~Wang, A.~Kowli, E.~Shafieepoorfard, and S.~Meyn, ``The
  value of volatile resources in electricity markets,'' \emph{Submitted to IEEE
  Transactions on Automatic Control}, 2013.

\bibitem{cal09K}
D.~S. Callaway, ``Tapping the energy storage potential in electric loads to
  deliver load following and regulation, with application to wind energy,''
  \emph{Energy Conversion and Management}, vol.~50, no.~5, p. 1389 Ð 1400,
  2009.

\bibitem{galus2010}
M.~D. Galus, R.~La~Fauci, and G.~Andersson, ``Investigating {PHEV} wind
  balancing capabilities using heuristics and model predictive control,'' in
  \emph{Power and Energy Society General Meeting, 2010 IEEE}, 2010, pp. 1--8.

\bibitem{papaoren2010}
A.~Papavasiliou and S.~Oren, ``Supplying renewable energy to deferrable loads:
  Algorithms and economic analysis,'' in \emph{Power and Energy Society General
  Meeting, 2010 IEEE}, 2010, pp. 1--8.

\bibitem{matdyscal12K}
J.~L. Mathieu, M.~Dyson, and D.~S. Callaway, ``Using residential loads for fast
  demand response: The potential resource and revenues, the costs and the
  policy recommendations,'' in \emph{2012 ACEEE Summer Study on Energy
  Efficiency in Buildings}, 2012.

\bibitem{anand2012}
A.~Subramanian, M.~Garcia, A.~Dominguez-Garcia, D.~Callaway, K.~Poolla, and
  P.~Varaiya, ``Real-time scheduling of deferrable electric loads,'' in
  \emph{American Control Conference (ACC), 2012}, 2012, pp. 3643--3650.

\bibitem{ChenLLF2011}
S.~Chen, L.~Tong, and T.~He, ``Optimal deadline scheduling with commitment,''
  in \emph{Communication, Control, and Computing (Allerton), 2011 49th Annual
  Allerton Conference on}, 2011, pp. 111--118.

\bibitem{LeeTPS2008}
T.-F. Lee, M.-Y. Cho, Y.-C. Hsiao, P.-J. Chao, and F.-M. Fang, ``Optimization
  and implementation of a load control scheduler using relaxed dynamic
  programming for large air conditioner loads,'' \emph{Power Systems, IEEE
  Transactions on}, vol.~23, no.~2, pp. 691--702, 2008.

\bibitem{HsuDLC1991}
Y.-Y. Hsu and C.-C. Su, ``Dispatch of direct load control using dynamic
  programming,'' \emph{Power Systems, IEEE Transactions on}, vol.~6, no.~3, pp.
  1056--1061, 1991.

\bibitem{MetsEVCharging2010}
K.~Mets, T.~Verschueren, W.~Haerick, C.~Develder, and F.~De~Turck, ``Optimizing
  smart energy control strategies for plug-in hybrid electric vehicle
  charging,'' in \emph{Network Operations and Management Symposium Workshops
  (NOMS Wksps), 2010 IEEE/IFIP}, 2010, pp. 293--299.

\bibitem{GanPESGM2012}
L.~Gan, U.~Topcu, and S.~Low, ``Stochastic distributed protocol for electric
  vehicle charging with discrete charging rate,'' in \emph{Power and Energy
  Society General Meeting, 2012 IEEE}, 2012, pp. 1--8.

\bibitem{MaIFAC2011}
Z.~Ma, I.~A. Hiskens, and D.~Callaway, ``Stochastic distributed protocol for
  electric vehicle charging with discrete charging rate,'' in \emph{18th IFAC
  World Congr.}, 2011, pp. 1--8.

\bibitem{GHGIREP2010}
G.~Hug-Glanzmann, ``Coordination of intermittent generation with storage,
  demand control and conventional energy sources,'' in \emph{Bulk Power System
  Dynamics and Control (iREP) - VIII (iREP), 2010 iREP Symposium}, 2010, pp.
  1--7.

\bibitem{IlicTPSNov2011}
M.~Ilic, L.~Xie, and J.-Y. Joo, ``Efficient coordination of wind power and
  price-responsive demand part i: Theoretical foundations,'' \emph{Power
  Systems, IEEE Transactions on}, vol.~26, no.~4, pp. 1875--1884, 2011.

\bibitem{Paschalidis}
I.~C. Paschalidis, L.~Binbin, and M.~C. Caramanis, ``Demand-side management for
  regulation service provisioning through internal pricing,'' \emph{IEEE
  Transactions on Power Systems}, vol.~27, no.~3, pp. 1531--1539, Aug 2012.

\bibitem{LijesenRTPElasticity}
\BIBentryALTinterwordspacing
M.~G. Lijesen, ``The real-time price elasticity of electricity,'' \emph{Energy
  Economics}, vol.~29, no.~2, pp. 249 -- 258, 2007. [Online]. Available:
  \url{http://www.sciencedirect.com/science/article/pii/S0140988306001010}
\BIBentrySTDinterwordspacing

\bibitem{Borenstein2005}
S.~Borenstein, ``The long-run efficiency of real-time electricity pricing,''
  \emph{The Energy Journal}, vol.~0, no. Number 3, pp. 93--116, 2005.

\bibitem{Spees2007}
\BIBentryALTinterwordspacing
K.~Spees and L.~B. Lave, ``Demand response and electricity market efficiency,''
  \emph{The Electricity Journal}, vol.~20, no.~3, pp. 69--85, April 2007.
  [Online]. Available:
  \url{http://ideas.repec.org/a/eee/jelect/v20y2007i3p69-85.html}
\BIBentrySTDinterwordspacing

\bibitem{RoozbehaniCDC2010}
M.~Roozbehani, M.~Dahleh, and S.~Mitter, ``On the stability of wholesale
  electricity markets under real-time pricing,'' in \emph{Decision and Control
  (CDC), 2010 49th IEEE Conference on}, 2010, pp. 1911--1918.

\bibitem{wannegkowshameysha11b}
G.~Wang, M.~Negrete-Pincetic, A.~Kowli, E.~Shafieepoorfard, S.~Meyn, and
  U.~Shanbhag, ``Dynamic competitive equilibria in electricity markets,'' in
  \emph{Control and Optimization Theory for Electric Smart Grids},
  A.~Chakrabortty and M.~Illic, Eds.\hskip 1em plus 0.5em minus 0.4em\relax
  Springer, 2011.

\bibitem{LBNL-RTPReport}
G.~L. Barbose, C.~A. Goldman, and B.~Neenan, ``A survey of utility experience
  with real time pricing,'' p. 127, 12/2004 2004.

\bibitem{tanvar93}
\BIBentryALTinterwordspacing
C.-W. Tan and P.~Varaiya, ``Interruptible electric power service contracts,''
  \emph{Journal of Economic Dynamics and Control}, vol.~17, no.~3, pp. 495 --
  517, 1993. [Online]. Available:
  \url{http://www.sciencedirect.com/science/article/pii/016518899390008G}
\BIBentrySTDinterwordspacing

\bibitem{pravin2011}
P.~Varaiya, F.~Wu, and J.~Bialek, ``Smart operation of smart grid:
  Risk-limiting dispatch,'' \emph{Proceedings of the IEEE}, vol.~99, no.~1, pp.
  40--57, 2011.

\bibitem{negmey12}
M.~Negrete-Pincetic and S.~Meyn, ``{Markets for Differentiated Electric Power
  Products in a Smart Grid Environment},'' in \emph{IEEE PES 12: Power Energy
  Society General Meeting}, 2012.

\bibitem{bitar2013}
E.~Bitar and Y.~Xu, ``On incentive compatibility of deadline differentiated
  pricing for deferrable demand,'' in \emph{Conference on Decision and
  Control}, 2013, p. in review.

\bibitem{OrenSmith}
\BIBentryALTinterwordspacing
S.~S. Oren and S.~A. Smith, \emph{\BIBforeignlanguage{English}{Service
  opportunities for electric utilities : creating differentiated products /
  edited by Shmuel S. Oren and Stephen A. Smith}}.\hskip 1em plus 0.5em minus
  0.4em\relax Kluwer Academic Publishers Boston, 1993. [Online]. Available:
  \url{http://www.loc.gov/catdir/enhancements/fy0819/92044559-t.html}
\BIBentrySTDinterwordspacing

\bibitem{bitarlow2012}
E.~Bitar and S.~Low, ``Deadline differentiated pricing of deferrable electric
  power service,'' in \emph{Decision and Control (CDC), 2012 IEEE 51st Annual
  Conference on}, 2012, pp. 4991--4997.

\bibitem{nayyar_tcns}
A.~Nayyar, M.~Negrete-Pincetic, K.~Poolla, and P.~Varaiya, ``Duration
  differentiated services with a continuum of loads,'' \emph{Submitted to IEEE
  Transactions on Control of Network Systems}, 2014,
  {\url{http://arxiv.org/abs/1408.5825}}.

\end{thebibliography}

\end{document}